
\documentclass[journal,12pt,draftcls,onecolumn]{IEEEtran}

\usepackage{amsmath}
\usepackage{amssymb}
\usepackage{amsmath}
\usepackage{amssymb}

\usepackage{epsfig}
\usepackage{epsf}
\usepackage{subfigure}
\usepackage{graphicx}

\def\BibTeX{{\rm B\kern-.05em{\sc i\kern-.025em b}\kern-.08em
    T\kern-.1667em\lower.7ex\hbox{E}\kern-.125emX}}

\newtheorem{claim}{Claim}

\newtheorem{theorem}{Theorem}

\newtheorem{lemma}{Lemma}

\title{Feedback Capacity of the Gaussian Interference Channel to Within 1.7075 Bits: the Symmetric Case}

\author{Changho Suh and David Tse \\
Wireless Foundations in the Department of EECS \\
University of California at Berkeley \\
Email: \{chsuh, dtse\}@eecs.berkeley.edu}

\begin{document}

\IEEEaftertitletext{
\begin{abstract}
We characterize the symmetric capacity to within 1.7075 bits/s/Hz for the two-user Gaussian interference channel with \emph{feedback}. The result makes use of a deterministic model to provide insights into the Gaussian channel. We derive a new outer bound to show that a proposed achievable scheme can achieve the symmetric capacity to within 1.7075 bits for all channel parameters. From this result, we show that feedback provides \emph{unbounded} gain, i.e., the gain becomes arbitrarily large for certain channel parameters. It is a surprising result because feedback has been so far known to provide only power gain (\emph{bounded} gain) in the context of multiple access channels and broadcast channels.
\end{abstract}
\begin{keywords}
Feedback Capacity, The Gaussian Interference Channel, A Deterministic Model
\end{keywords}
}

\maketitle

\section{Introduction}

Shannon showed that feedback does not increase capacity in the discrete-memoryless point-to-point channel \cite{shannon:it}.
However, in the multiple access channel (MAC), Gaarder and Wolf \cite{Gaarder:it} showed that feedback could increase capacity although the channel is memoryless. Inspired by this result, Ozarow \cite{Ozarow:it} found the feedback capacity region for the two-user Gaussian MAC.
However, capacity results have been open for more-than-two-user Gaussian MACs and general MACs.
Ozarow's result implies that feedback provides only power gain (\emph{bounded} gain). The reason of bounded gain is that transmitters cooperation induced by feedback can at most boost signal power (via aligning signal directions) in the MAC. Boosting signal power provides a capacity increase of a constant number of bits.

Now a question is ``Will feedback help significantly in other channels where each receiver wants to decode \emph{only desired} messages in the presence of \emph{undesired} messages (interferences)?'' To answer this question, we focus on the simple two-user Gaussian interference channel where each receiver wants to decode the messages only from its corresponding transmitter. In this channel, we show that feedback can provide \emph{unbounded} gain
for certain channel parameters. For this, we first characterize the symmetric feedback capacity for a \emph{linear} deterministic model \cite{Salman:allterton07} well capturing key properties of the Gaussian channel. Gaining insights from this model, we develop a simple two-staged achievable scheme in the Gaussian channel. We then derive a new outer bound to show that the proposed scheme achieves the symmetric capacity to within 1.7075 bits for all channel parameters.
\begin{figure}[h]
\begin{center}
{\epsfig{figure=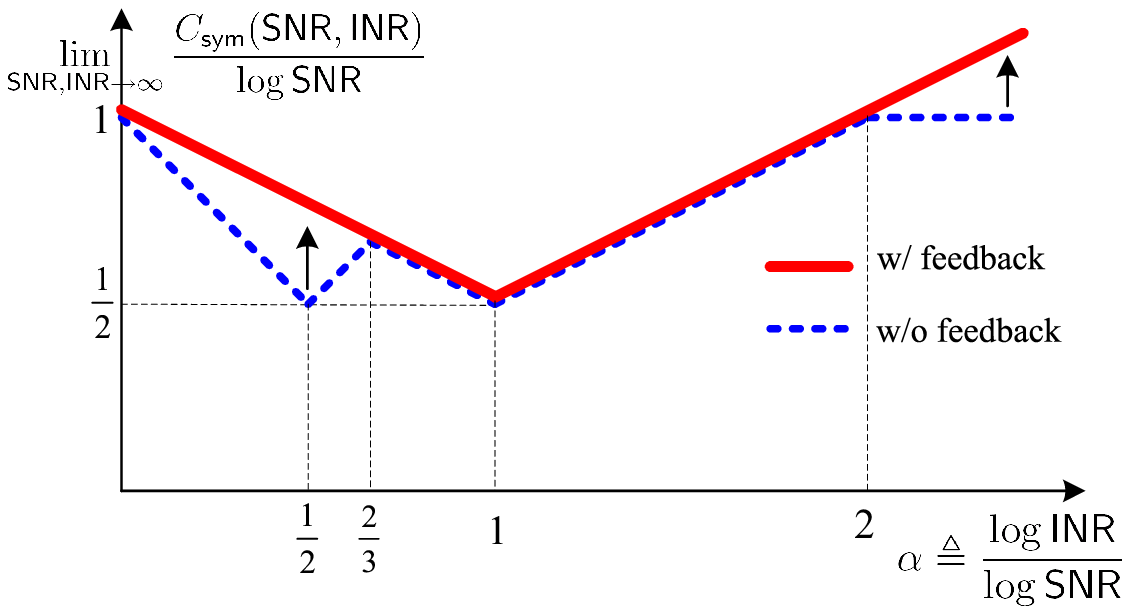, angle=0, width=0.8\textwidth}}
\end{center}
\caption{The generalized degrees-of-freedom of the Gaussian interference channel with feedback} \label{fig:gdof}
\end{figure}

The unbounded gain of feedback can be shown from the generalized degrees-of-freedom (g.d.o.f.) in Fig. \ref{fig:gdof}, defined in \cite{dtse:it07} as
\begin{align}
d(\alpha) \triangleq \lim_{\mathsf{SNR}, \mathsf{INR} \rightarrow \infty} \frac{C_{\mathsf{sym}}(\mathsf{SNR},\mathsf{INR})}{ \log \mathsf{SNR}},
\end{align}
where $\alpha$ ($x$-axis) indicates the ratio of $\mathsf{INR}$ to $\mathsf{SNR}$ in dB scale: $\alpha \triangleq \frac{\log \mathsf{INR}}{ \log \mathsf{SNR}}$.
Note that in the weak interference regime ($0 \leq \alpha \leq \frac{2}{3}$) and in the very strong interference regime ($ \alpha \geq 2$), feedback gain becomes arbitrarily large as $\mathsf{SNR}$ and $\mathsf{INR}$ go to infinity as long as channel parameters keep the certain scale so that $\alpha$ remains same. This implies \emph{unbounded} gain. This is a surprising result because feedback has been so far known to provide only power gain (\emph{bounded} gain) in the context of multiple access channels and broadcast channels \cite{Ozarow:it,Ozarow:it2}.

Some work has been done in the interference channel with feedback \cite{Kramer:it02, Kramer:it04, GastparKramer:06, Jiang:07}. In \cite{Kramer:it02, Kramer:it04}, Kramer developed a feedback strategy and derived an outer bound in the Gaussian channel; and later derived a dependence-balance outer bound with Gastpar \cite{GastparKramer:06}. However, the gap between those outer bounds and the inner bound is not tight in almost cases, except one specific set of power and channel parameters. For some channel parameters, Kramer's scheme is worse than the best known non-feedback scheme \cite{HanKoba:it81}. Recently, Jiang-Xin-Garg \cite{Jiang:07} found an achievable region in the discrete memoryless interference channel with feedback, based on the block Markov encoding \cite{Cover:it79} and the Han-Kobayashi scheme \cite{HanKoba:it81}. However, their scheme includes three auxiliary random variables requiring further optimization. Also they did not provide any upper bounds. On the other hand, we propose a simple two-staged achievable scheme which is \emph{explicit} and has \emph{only two} stages. Also we derive a tighter outer bound to do tight capacity characterization to within a constant of number of bits. Later we will provide more detailed comparison to Kramer's scheme \cite{Kramer:it02} in Section \ref{sec:Comparison}.

\section{Model}
\label{sec-DIC}

Fig. \ref{fig:GaussianDIC} (a) describes the Gaussian interference channel with feedback. We consider the symmetric interference channel where $g_{11}=g_{22}=g_d$, $g_{12}=g_{21}=g_c$, and $P_1=P_2=P$.
Without loss of generality, we assume that signal power and noise power are normalized to 1, i.e., $P_k=1$, $Z_k \sim \mathcal{CN}(0,1)$, $\forall k=1,2$. Hence, signal-to-noise ratio and interference-to-noise ratio can be defined to capture channel gains:
\begin{align}
\begin{split}
\mathsf{SNR} \triangleq |g_d|^2, \; \mathsf{INR} \triangleq |g_c|^2.
\end{split}
\end{align}

There are two independent and uniformly distributed sources, $W_k \in \left\{ 1,2,\cdots, M_k \right\}, \forall k=1,2$.
Due to feedback, the encoded signal $X_{ki}$ of user $k$ at time $i$ is a function of its own message and past output sequences:
\begin{align}
X_{ki} = f_{k}^{i} \left(W_k,Y_{k1},\cdots,Y_{k(i-1)} \right)= f_k^{i} \left(W_k,Y_{k}^{i-1} \right)
\end{align}
where we use shorthand notation $Y_{k}^{i-1}$. The symmetric capacity is defined by
\begin{align}
C_{\mathsf{sym}} = \sup \left\{R: (R,R) \in \mathcal{R} \right\},
\end{align}
where $\mathcal{R}$ is the capacity region.

\begin{figure}[h]
\begin{center}
{\epsfig{figure=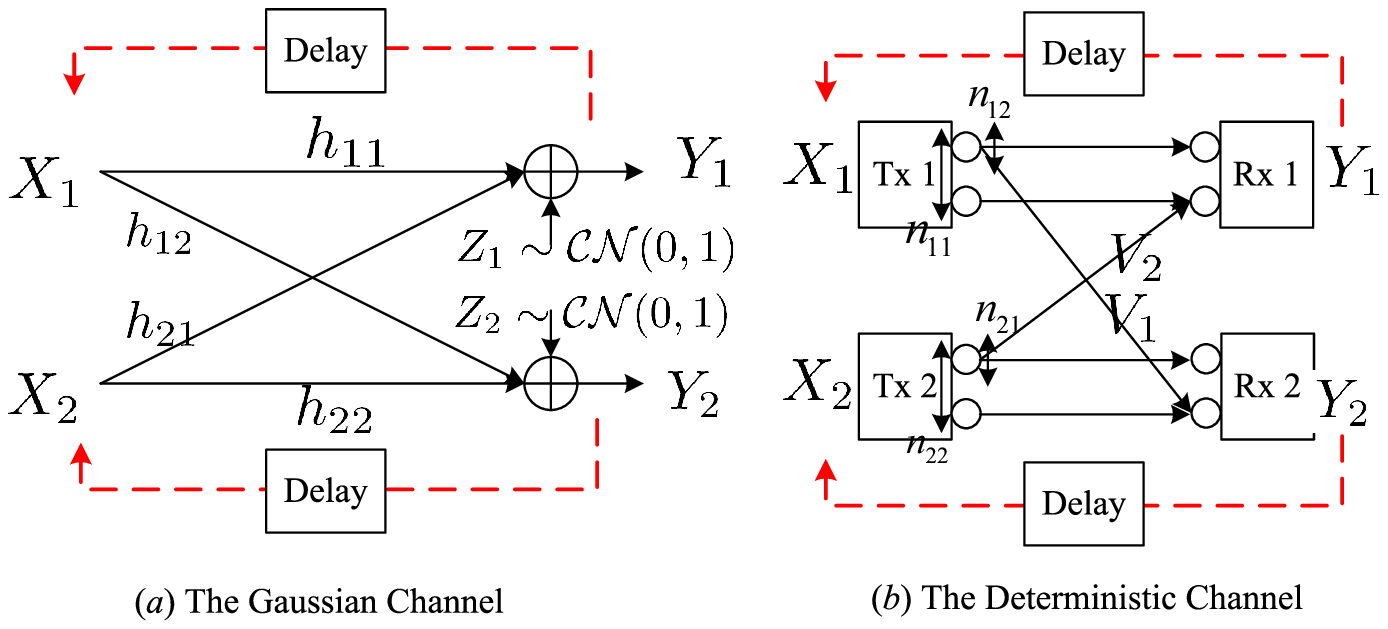, angle=0, width=0.8\textwidth}}
\end{center}
\caption{The Gaussian (and Deterministic) Interference Channels with Feedback} \label{fig:GaussianDIC}
\end{figure}


We first consider the deterministic model as shown in Fig. \ref{fig:GaussianDIC} (b). The symmetric deterministic channel is characterized by two values: $n =  n_{11} = n_{22}$ and $m =  n_{12} = n_{21}$,
where $n$ and $m$ indicate the number of signal bit levels that we can send through direct link and cross link, respectively. For each level, we assume a \emph{modulo}-2-addition. This model is useful because in the non-feedback case, the deterministic interference channel approximates the Gaussian channel within a constant gap \cite{bresler:europe}. In the feedback-case, we expect a similar constant gap as well. In the Gaussian channel, $n$ and $m$ correspond to channel gains in dB scale., i.e.,
\begin{align*}
n =  \lfloor  \log  \mathsf{SNR} \rfloor, \;m =  \lfloor  \log \mathsf{INR} \rfloor,
\end{align*}
and the modulo-2-addition corresponds to a real addition, which causes a fundamental gap between two channels. Our strategy is to first come up with a deterministic scheme, gain insights from it, and then mimic the scheme to the Gaussian channel.

\section{A Deterministic Interference Channel}
\label{sec-LinearDeterminstic}

\begin{theorem}
\label{theorem-linear-symmetric}
The symmetric feedback capacity of a deterministic interference channel is given by
\begin{align}
\begin{split}
C_{\mathsf{sym}} =  \frac{\max(n,m) + (n-m)^+}{2}.
\end{split}
\end{align}
\end{theorem}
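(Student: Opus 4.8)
The plan is to prove the formula by establishing a matching inner bound (achievability) and outer bound (converse), handling the weak-interference regime $m\le n$, where $C_{\mathsf{sym}}=\frac{2n-m}{2}=n-\frac m2$, and the strong-interference regime $m\ge n$, where $C_{\mathsf{sym}}=\frac m2$, separately, since the bit-level structure of the deterministic channel differs qualitatively in the two cases.

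For achievability I would exhibit an explicit two-stage scheme, realizable over a block of two time slots. In the first slot both transmitters send fresh information bits on all of their signal levels; at each intended receiver the bottom $(n-m)^+$ levels arrive cleanly, the top $\min(n,m)$ levels carry the XOR of the two users' bits, and (in the strong regime) the additional $m-n$ cross-link levels carry pure interference. After the first slot, feedback lets transmitter $k$ subtract its own contribution from $Y_k$ and thereby recover exactly the bits of user $k'$ that collided at receiver $k$. In the second slot each transmitter forwards these overheard interfering bits on a suitably chosen set of levels and fills the remaining levels with fresh bits; the forwarded bits simultaneously (i) allow the intended receiver to cancel and recover its own collided bits from the first slot, and (ii) arrive at the other receiver as now-known interference, enabling it to cancel as well. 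One then checks by bit-counting that over the two slots each user reliably conveys $2C_{\mathsf{sym}}$ bits. This verification is the routine part but must be carried out regime-by-regime (e.g. $m\le n$, $n<m\le 2n$, $m>2n$) because the second-slot level allocation changes with the regime.

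For the converse in the strong regime $m\ge n$ I would use a feedback-reconstruction argument. Once receiver $1$ has decoded $W_1$, it can causally recompute $X_1^N$ from $W_1$ and the received $Y_1^{i-1}$ via $X_{1i}=f_1^i(W_1,Y_1^{i-1})$; cancelling $X_1^N$ from $Y_1^N$ recovers the cross-link image of $X_2^N$, which in the strong regime is all $\max(m,n)=m$ levels of $X_2^N$, hence $X_2^N$ itself; from $X_1^N$ and $X_2^N$ it reconstructs $Y_2^N$ and can therefore decode $W_2$ as receiver $2$ does. So $(W_1,W_2)$ are jointly decodable at receiver $1$, and Fano's inequality yields $N(R_1+R_2)\le H(Y_1^N)+N\epsilon_N\le N\max(m,n)+N\epsilon_N=Nm+N\epsilon_N$, i.e. $2R\le m$.

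For the converse in the weak regime $m\le n$ I would adapt a genie-aided, ETW-style sum-rate bound to the feedback setting: supplying receiver $1$ with the side information $V_2^N$, the top-$m$-level cross-link image of $X_2^N$, turns $NR_1\le I(W_1;Y_1^N,V_2^N)+N\epsilon_N$ into essentially $H(X_1^N)$ after cancellation, while $NR_2\le I(W_2;Y_2^N)+N\epsilon_N$ contributes $H(Y_2^N)-H(V_1^N\mid\cdots)$; the aim is to collapse the sum to $H(X_1^N\mid V_1^N)+H(Y_2^N)\le N(n-m)+Nn=N(2n-m)$, using that $V_1^N$ occupies the top $m$ of the $n$ levels of $X_1^N$ and that $H(Y_{2i})\le n$. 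I expect this last collapse to be the main obstacle: with feedback, $W_1-X_1^N-X_2^N$ is no longer a Markov chain, since $X_2^N$ is a deterministic function of $W_2$ together with the past inputs $X_1^{N-1}$ (and symmetrically $X_1^N$ depends on $W_2$ through the feedback link), so the conditional-independence and entropy inequalities that make the cancellation work cannot be invoked directly and must be re-established by unrolling the causal encoding maps $X_{ki}=f_k^i(W_k,Y_k^{i-1})$ and choosing carefully which quantity is genie-aided and in which order the mutual-information terms are expanded. Getting this bookkeeping right is precisely where the paper's ``new outer bound'' does its work.
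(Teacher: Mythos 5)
Your achievability argument is the paper's two-stage scheme and is sound (the extra split into $n<m\le 2n$ versus $m>2n$ is not needed; the construction is uniform within each of $m\ge n$ and $m<n$). Your strong-regime converse takes a genuinely different, and valid, route: a Sato-type argument in which receiver~1, after decoding $W_1$, causally reconstructs $X_1^N$ from $X_{1i}=f_1^i(W_1,Y_1^{i-1})$, cancels it from $Y_1^N$ to obtain $V_2^N=X_2^N$ (valid since $m\ge n$), reconstitutes $Y_2^N$, and therefore also decodes $W_2$, giving $N(R_1+R_2)\le H(Y_1^N)+N\epsilon_N\le Nm+N\epsilon_N$. This is self-contained; the paper instead derives a single unified genie bound $N(R_1+R_2)\le\sum_i\big[H(Y_{1i}|V_{1i},V_{2i})+H(Y_{2i})\big]+N\epsilon_N$ that specializes to $\max(n,m)+(n-m)^+$ in both regimes at once, so no separate strong-regime argument is needed there.

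The genuine gap is the weak-regime converse, and the obstacle you anticipate is exactly what the paper has to resolve. Two issues with your sketch. First, the genie is the wrong one: giving receiver~1 its \emph{incoming} interference $V_2^N$ bounds $NR_1$ by roughly $H(X_1^N)\le Nn$ but leaves the $-H(V_1^N|\cdot)$ piece from the $R_2$ bound with nothing to cancel against. The correct enhancement is the \emph{outgoing} interference $V_1^N$ (the image of $X_1$ seen by receiver~2), inserted inside an expression already conditioned on $W_2$: $N(R_1+R_2)\le H(Y_1^N,V_1^N|W_2)+I(W_2;Y_2^N)=H(Y_1^N|V_1^N,W_2)+H(Y_2^N)+\big[H(V_1^N|W_2)-H(Y_2^N|W_2)\big]$. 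Second, the bracketed term must vanish \emph{exactly}, and proving $H(V_1^N|W_2)=H(Y_2^N|W_2)$ is precisely the causal bookkeeping you flag: one shows by induction that $X_2^i$ is a function of $(W_2,V_1^{i-1})$ --- everything $W_1$ contributes to transmitter~2's observations passes through the $V_1$ link --- and then a chain-rule expansion gives $H(Y_{2i}|Y_2^{i-1},W_2)=H(V_{1i}|V_1^{i-1},W_2)$ for each $i$. The same functional dependence lets $X_2^N,V_2^N$ be added to the conditioning, so $H(Y_1^N|V_1^N,W_2)$ single-letterizes to $\sum_i H(Y_{1i}|V_{1i},V_{2i})\le N(n-m)^+$. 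Without this exact identity the feedback-induced dependence of $X_1^N$ (hence of $V_1^N$) on $W_2$ blocks the estimate, so as written your proposal does not close the $m<n$ case.
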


\subsection{Proof of Achievablility}

\begin{figure}[h]
\begin{center}
{\epsfig{figure=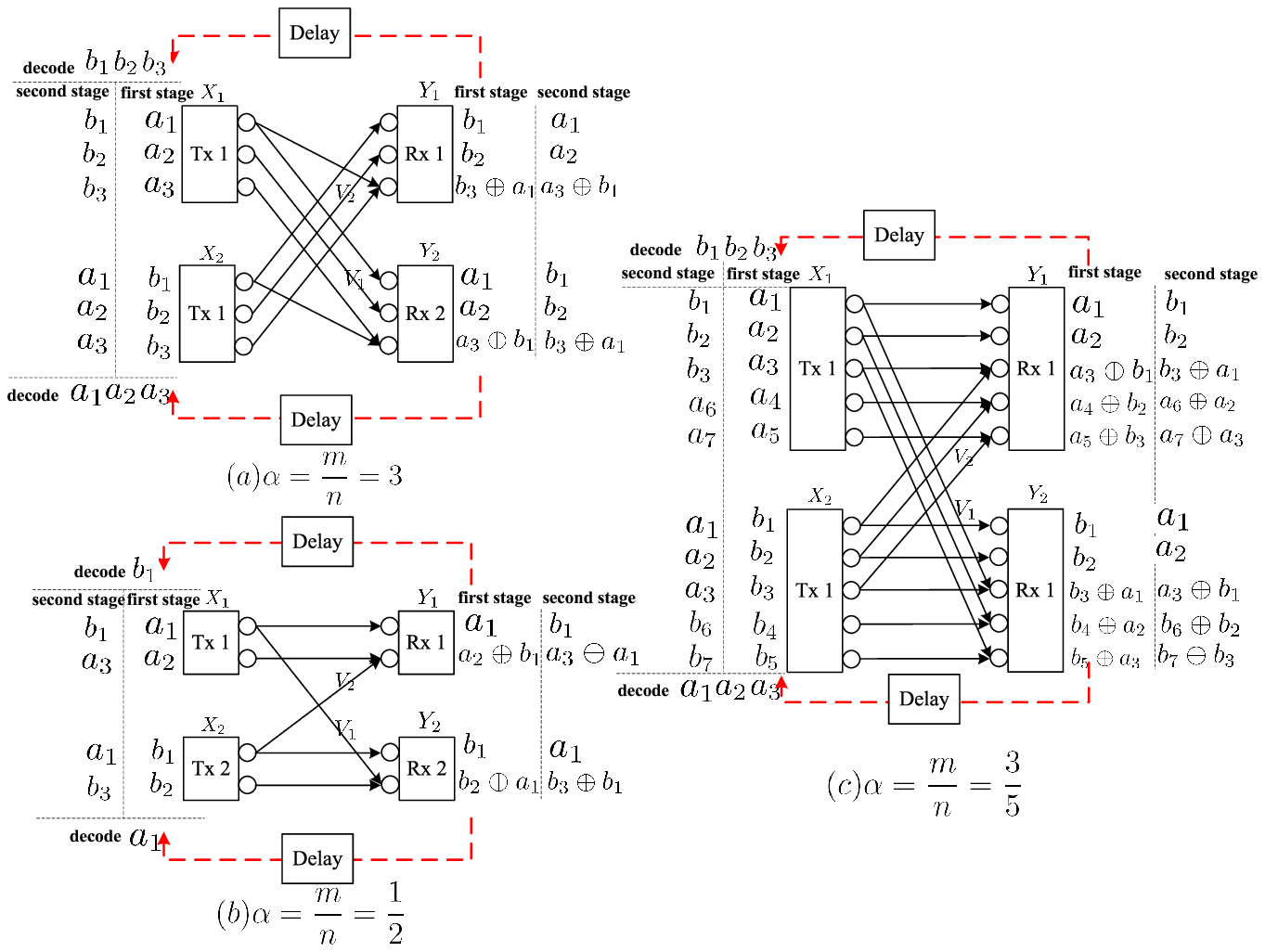, angle=0, width=0.9\textwidth}}
\end{center}
\caption{An achievable scheme of the deterministic interference channel with feedback} \label{fig:example}
\end{figure}

\textbf{Review of a non-feedback scheme \cite{bresler:europe}:}
In the non-feedback case, we typically separate into two regimes depending on the strength of interference. In the strong interference channel, the key fact is that all of the feasible rate tuples are decodable at both receivers, i.e., all messages are \emph{common}. Also since the number of received bit levels is $m$, an  achievable scheme is to send $\min(\frac{m}{2},n)$. Notice that the number of transmission bits is limited by the number $n$ of direct link bit levels. On the other hand, in the weak interference channel, only part of information is visible to the other receiver. So we spit information into two parts (the Han-Kobayashi scheme): a common part (decodable at both receivers); a private part (decodable only at the desired receiver). An achievable scheme is to send $(n-m)$ bits for private information and to send some number of bits for common information which depends on $\frac{m}{n}$. For the feedback case, we will adopt the above setting. We start with the simpler case: the strong interference regime.

\textbf{The strong interference regime ($m \geq n$):}
We will explain a scheme based on a simple example of Fig. \ref{fig:example} (a). Mimicking the non-feedback case, transmitters send only common information. The main point of a scheme is to use two stages.
In the first stage, transmitter 1 sends $a_1$, $a_2$, $a_3$ and transmitter 2 sends $b_1$, $b_2$, $b_3$. Note that each transmitter sends the whole $m$ bits instead of $\min(\frac{m}{2},n)$ (the number of bits sent in the non-feedback case). Due to this, each receiver needs to defer decoding to the second stage. In the second stage, using feedback, each transmitter decodes information of the other user, e.g., transmitter 1 decodes $b_1$, $b_2$, $b_3$ and transmitter 2 decodes $a_1$, $a_2$, $a_3$. Each transmitter then sends information of the other user.

Now each receiver can decode its own data by subtracting the received signal in the first stage from the second. Receiver 1 decodes $a_1$, $a_2$, $a_3$ by subtracting $b_1$ from the second received signal. Notice that the second stage was used for refining all bits sent previously, without sending additional information. Therefore, the symmetric rate is $\frac{3}{2}$. Considering the general case $(n,m)$, we achieve
\begin{align}
R_{\mathsf{sym}} = \frac{m}{2}.
\end{align}

Note that with feedback, the symmetric rate can exceed $n$ bits, which was the limit for the non-feedback case. 
This is because the very strong interfering link helps significantly to \emph{relay} other messages through feedback. For example, the information flow of user 1 is through \emph{indirect} links ($X_{1} \rightarrow V_{1} \rightarrow \textrm{feedback} \rightarrow X_2 \rightarrow V_2 \rightarrow Y_1$) instead of direct link ($X_{1} \rightarrow Y_{1}$). This concept coincides with \emph{correlation routing} in Kramer's paper \cite{Kramer:it02}.

\textbf{The weak interference regime ($m < n$):}
We will explain a scheme based on an example of Fig. \ref{fig:example} (b).
Similar to the non-feedback case, information is split into two parts. But it has two stages.
In the first stage, transmitter 1 sends private information $a_2$ on the lower level (invisible to the other receiver) and common information $a_1$ on the upper signal level (visible to the other receiver). Similarly transmitter 2 sends $b_1$ and $b_2$. Similar to the non-feedback case, each transmitter sends $(n-m)$ private bits. However, there is a difference in sending common information. Each transmitter sends $m$ common bits whatever $\frac{m}{n}$ is, unlike the non-feedback case where the number of common bits depends on $\frac{m}{n}$. Then, receiver 1 gets the clean signal $a_1$ on the upper level and the interfered signal $a_2 \oplus b_1$ on the lower level.
In this case ($\alpha = \frac{1}{2}$), receiver 1 can decode its common information $a_1$ in the first stage. However, for the other case, e.g., $\alpha = \frac{3}{5}$ (Fig. \ref{fig:example} (c)), receiver 1 cannot fully decode common information in the first stage because a part of it is interfered by common information of the other user. Therefore, each receiver needs to defer decoding to the second stage.

In the second stage, with feedback, each transmitter can decode common information of the other user. Transmitter 1 and 2 can decode $b_1$ and $a_1$, respectively. Each transmitter then sends common information of the other user on the upper level. Sending this, receiver 1 can refine the corrupted symbol received in the first stage without causing any interferences to the other receiver. On the lower level, each transmitter sends new private information. Transmitter 1 and 2 send $a_3$ and $b_3$, respectively.

Using the first and second received signals, receiver 1 can now decode the corrupted symbol $a_2$ sent in the first stage. At the same time, it can decode new private information $a_3$ by stripping $a_1$. During two stages, each receiver can decode three symbols out of two levels. Therefore, the symmetric rate is $\frac{3}{2 \cdot 2}$. This scheme can be easily generalized into the case of $(n, m)$. During two stages, each receiver can decode all of the messages sent in the first stage and a new private message sent in the second stage. Therefore, the symmetric rate is

\begin{align}
R_{\mathsf{sym}} = \frac{n + (n-m) }{2} = n -\frac{m}{2}.
\end{align}

\textbf{Remarks on the achievable scheme:}
Our two-staged scheme has some similarity with an achievable scheme in \cite{Jiang:07} in that using feedback each transmitter decodes common information of the other user. However, our scheme is different since it is \emph{explicit} and has \emph{only two} stages, while the scheme in \cite{Jiang:07} employs three auxiliary random variables (requiring further optimization) and the block Markov encoding (requiring a long block length).

\subsection{Proof of Converse}
We have
\begin{align*}
\begin{split}
N&(R_1 + R_2)= H(W_1) + H(W_2) \overset{(a)}{=}  H(W_1|W_2)+ H(W_2) \\
&\overset{(b)}{\leq} I(W_1;Y_1^{N}|W_2) + I(W_2;Y_2^{N}) + N \epsilon_N \\
&\overset{(c)}{=} H(Y_1^{N}|W_2) + I(W_2;Y_2^{N}) + N \epsilon_N \\
&\leq H(Y_1^{N},V_1^{N}|W_2) + I(W_2;Y_2^{N}) + N \epsilon_N \\
&=H(Y_1^{N}|V_1^{N},W_2) + H(Y_2^{N}) + \left[H(V_1^{N}|W_2) - H(Y_2^{N}|W_2)\right] +  N \epsilon_N \\
&\overset{(d)}{=} H(Y_1^{N}|V_1^{N},W_2) + H(Y_2^{N}) +  N \epsilon_N \\
&\overset{(e)}{=} H(Y_1^{N}|V_1^{N},W_2,X_2^{N},V_2^{N}) + H(Y_2^{N}) +  N \epsilon_N \\
&\overset{(f)}{\leq} \sum_{i=1}^{N} \left[ H(Y_{1i}|V_{1i},V_{2i}) + H(Y_{2i}) \right] + N \epsilon_N
\end{split}
\end{align*}
where ($a$) follows from the independence of $W_1$ and $W_2$; ($b$) follows from Fano's inequality; ($c$) follows from the fact that $Y_1^{N}$ is a function of $W_1$ and $W_2$; ($d$) follows from  $H(V_1^N|W_2)=H(Y_2^N|W_2)$ (see Claim \ref{claim-2}); ($e$) follows from the fact that $X_2^{N}$ is a function of $(W_2,V_1^{N-1})$ (see Claim \ref{claim-1}) and $V_2^{N}$ is a function of $X_2^{N}$; ($f$) follows from the fact that conditioning reduces entropy.

\begin{claim}
\label{claim-2}
$H(V_1^{N}|W_2) = H(Y_2^{N}|W_2).$
\end{claim}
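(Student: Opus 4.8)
\textbf{Proof proposal for Claim~\ref{claim-2}.} The plan is to show that, conditioned on $W_2$, the sequences $V_1^N$ and $Y_2^N$ are deterministic functions of one another; equality of the two conditional entropies is then immediate. Recall that in the deterministic model $V_{1i}$ is the vector of signal levels of transmitter~$1$ that arrive at receiver~$2$, and the output at receiver~$2$ has the form $Y_{2i} = (\text{direct part of } X_{2i}) \oplus V_{1i}$, where $\oplus$ is the componentwise modulo-$2$ addition, which is its own inverse.

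First I would argue that $Y_2^N$ is a function of $(W_2, V_1^N)$, by induction on $i$. For the base case, $X_{21} = f_2^1(W_2)$ depends only on $W_2$, so $Y_{21} = (\text{direct part of } X_{21}) \oplus V_{11}$ is determined by $(W_2, V_{11})$. For the inductive step, if $Y_2^{i-1}$ is a function of $(W_2, V_1^{i-1})$, then by the feedback encoding rule $X_{2i} = f_2^i(W_2, Y_2^{i-1})$ is also a function of $(W_2, V_1^{i-1})$ --- this is exactly the content of Claim~\ref{claim-1} --- and hence $Y_{2i} = (\text{direct part of } X_{2i}) \oplus V_{1i}$ is a function of $(W_2, V_1^i)$.

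For the reverse direction, I would note that $X_{2i} = f_2^i(W_2, Y_2^{i-1})$ depends only on $W_2$ and the strictly past outputs, so $X_2^N$ is a function of $(W_2, Y_2^{N})$; then inverting the modulo-$2$ addition gives $V_{1i} = Y_{2i} \oplus (\text{direct part of } X_{2i})$, a function of $(W_2, Y_2^i)$, so $V_1^N$ is a function of $(W_2, Y_2^N)$. Since, conditioned on $W_2$, the maps $V_1^N \mapsto Y_2^N$ and $Y_2^N \mapsto V_1^N$ are mutually inverse, we get $H(V_1^N \mid W_2) = H(Y_2^N \mid W_2)$.

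The only point requiring care is the bookkeeping of the feedback dependency: recovering $V_{1i}$ from $Y_2^i$ works precisely because $X_{2i}$ is computable from $W_2$ and $Y_2^{i-1}$, so no circular dependence arises --- this is where Claim~\ref{claim-1} is invoked. Everything else is the routine observation that modulo-$2$ addition is invertible and that a deterministic one-to-one correspondence preserves entropy.
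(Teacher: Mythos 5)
Your proposal is correct and reaches the same conclusion as the paper, but packages the argument differently. The paper proves the claim by a term-by-term chain-rule manipulation: it writes $H(Y_2^N|W_2) = \sum_i H(Y_{2i}|Y_2^{i-1},W_2)$, swaps $Y_{2i}$ for $V_{1i}$ in each summand (since, conditioned on $(W_2,Y_2^{i-1})$, they determine one another), then adds and removes conditioning terms using the same functional dependencies, finally recombining into $H(V_1^N|W_2)$. You instead establish once and for all that, conditioned on $W_2$, the sequences $V_1^N$ and $Y_2^N$ are deterministic functions of each other (so $H(V_1^N|W_2)=H(V_1^N,Y_2^N|W_2)=H(Y_2^N|W_2)$), proving one direction by induction and the other by inverting the modulo-$2$ sum after computing $X_{2i}$ from $(W_2,Y_2^{i-1})$. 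Both arguments lean on exactly the same structural facts --- the deterministic channel law $Y_{2i}=(\text{direct part of }X_{2i})\oplus V_{1i}$, invertibility of $\oplus$, the feedback encoding rule, and Claim~\ref{claim-1} --- so the mathematical content is the same; your version is a bit more conceptual and self-contained (it makes the bijection explicit), while the paper's chain-rule version maps directly onto its Gaussian counterpart (Claim~\ref{claim-4}), where the same step-by-step conditional-entropy bookkeeping is reused with differential entropies.
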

\begin{proof}

\begin{align*}
\begin{split}
H&(Y_2^{N}|W_2) = \sum_{i=1}^{N}H(Y_{2i}|Y_2^{i-1},W_2) \\
&\overset{(a)}{=} \sum_{i=1}^{N}H(V_{1i}|Y_2^{i-1},W_2) \\
&\overset{(b)}{=}  \sum_{i=1}^{N}H(V_{1i}|Y_2^{i-1},W_2,X_2^{i},V_1^{i-1}) \\
&\overset{(c)}{=}  \sum_{i=1}^{N}H(V_{1i}|W_2,V_1^{i-1}) =H(V_1^{N}|W_2),
\end{split}
\end{align*}
where ($a$) follows from the fact that  $Y_{2i}$ is a function of $(X_{2i},V_{1i})$ and $X_{2i}$ is a function of $(W_2, Y_2^{i-1})$; ($b$) follows from the fact that $X_2^{i}$ is a function of $(W_2,Y_2^{i-1})$ and $V_2^{i}$ is a function of $X_2^{i}$; ($c$) follows from the fact that $Y_{2}^{i-1}$ is a function of $(X_2^{i-1},V_1^{i-1})$ and $X_2^{i}$ is a function of  $(W_2,V_1^{i-1})$ (by Claim \ref{claim-1}).

\end{proof}

\begin{claim}
\label{claim-1}
For all $i\geq 1$, $X_1^{i}$ is a function of $(W_1,V_2^{i-1})$ and $X_2^{i}$ is a function of $(W_2,V_1^{i-1})$.
\end{claim}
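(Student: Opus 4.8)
The plan is an induction on $i$, resting on two structural facts. First, the feedback encoding constraint $X_{ki} = f_k^i(W_k, Y_k^{i-1})$. Second, the deterministic nature of the channel: $Y_{1i}$ is a deterministic function of $(X_{1i}, V_{2i})$ and $V_{2i}$ is a deterministic function of $X_{2i}$ alone, and symmetrically $Y_{2i}$ is a deterministic function of $(X_{2i}, V_{1i})$ with $V_{1i}$ a deterministic function of $X_{1i}$. I would record this second fact explicitly as a displayed observation before starting the induction, since it is the crucial property of the model that the argument exploits. By the symmetry of the channel it suffices to prove the assertion for $X_1^i$; the proof for $X_2^i$ is verbatim with the two users interchanged, and the two inductions decouple.

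For the base case $i=1$, the past output sequence $Y_1^0$ is empty, so $X_{11} = f_1^1(W_1)$ depends only on $W_1$, which is trivially a function of $(W_1, V_2^0)$. For the inductive step, assume $X_1^i$ is a function of $(W_1, V_2^{i-1})$; I want $X_1^{i+1}$ to be a function of $(W_1, V_2^i)$. Since $X_1^i$ is already such a function (the inductive hypothesis gives it as a function of $(W_1, V_2^{i-1})$, and $V_2^{i-1}$ is a sub-tuple of $V_2^i$), it remains to handle the new coordinate $X_{1(i+1)} = f_1^{i+1}(W_1, Y_1^i)$, for which it is enough to show $Y_1^i$ is a function of $(W_1, V_2^i)$. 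Writing $Y_1^i = (Y_{11}, \cdots, Y_{1i})$ and invoking the displayed fact, each $Y_{1j}$ with $j \leq i$ is a function of $(X_{1j}, V_{2j})$; here $X_{1j}$ is a coordinate of $X_1^i$, which the inductive hypothesis makes a function of $(W_1, V_2^{i-1})$, and $V_{2j}$ is a coordinate of $V_2^i$. Hence $Y_1^i$, and therefore $X_{1(i+1)}$ and then $X_1^{i+1}$, is a function of $(W_1, V_2^i)$, closing the induction.

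I do not anticipate a genuine obstacle; the claim is essentially an unwinding of the definitions. The one point requiring care is to state and use the deterministic-model property cleanly --- that receiver $1$'s output sees transmitter $2$'s input only through $V_{2i}$, with no noise and no dependence on $X_{1i}$ in $V_{2i}$ --- because this is exactly what lets transmitter $1$ extract, from $(W_1, Y_1^{i-1})$, no more ``side information'' about user $2$ than $V_2^{i-1}$. I would also note that the induction is self-contained and in particular does not use Claim~\ref{claim-2} (whose proof, conversely, does invoke this claim), so there is no circularity.
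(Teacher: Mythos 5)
Your proof is correct, and it rests on the same underlying observation as the paper's: in the deterministic channel, everything receiver 1 learns about transmitter~2 is mediated by $V_2$, and feedback introduces a one-step delay. The paper asserts this informally by appeal to Fig.~\ref{fig:GaussianDIC}(b) (``we can easily see that information of $W_2$ delivered to the first link must pass through $V_{2i}$''), whereas you turn it into a self-contained induction on $i$, making explicit both the base case ($X_{11}=f_1^1(W_1)$, no feedback at time~$1$) and the inductive step (each $Y_{1j}$ is a deterministic function of $(X_{1j},V_{2j})$, so $Y_1^i$ is a function of $(W_1,V_2^i)$). The inductive write-up is a strictly more rigorous rendering of the paper's argument and, as you note, avoids any appearance of circularity with Claim~\ref{claim-2}.
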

\begin{proof}
By symmetry, it is enough to prove only one. Since the channel is deterministic (noiseless), $X_1^{i}$ is a function of $W_1$ and $W_2$. In Fig. \ref{fig:GaussianDIC} (b), we can easily see that information of $W_2$ delivered to the first link must pass through $V_{2i}$. Also note that $X_{1i}$ depends on the past output sequences until $i-1$ (due to feedback delay). Therefore, $X_1^{i}$ is a function of $(W_1,V_2^{i-1})$.
\end{proof}

Now let the time index $Q$ be a random variable uniformly distributed over the set $\{1,2,\cdots,N\}$ and independent of $(W_1,W_2,X_1^N,X_2^N,Y_1^N,Y_2^N)$.
We define $X_k = X_{kQ}, \;V_k = V_{kQ}, Y_k = Y_{kQ}, \forall k=1,2$. If $(R_1,R_2)$ is achievable, then $\epsilon_N \rightarrow 0$ as $N \rightarrow \infty$. Hence, we get
\begin{align*}
R_1 + R_2 &\leq  H(Y_{1}|V_{1},V_{2}) + H(Y_{2}).
\end{align*}
Since the RHS is maximized when $X_1$ and $X_2$ are uniform and independent, we get
\begin{align}
C_{\mathsf{sym}} \leq \frac{\max(n,m) + (n-m)^+}{2}.
\end{align}
This establishes the converse.

\section{The Gaussian Interference Channel}
\label{sec:GIC}

\subsection{An Achievable Rate}

\begin{theorem}
\label{theorem:GaussianAchievable}
In the strong Gaussian interference channel ($\mathsf{INR} \geq \mathsf{SNR}$), we can achieve
\begin{align}
\begin{split}
\label{eq:Rsym_strong}
R_{\mathsf{sym}}^{\mathsf{strong}} =  \frac{1}{2} \log \left(1 + \mathsf{INR} \right).
\end{split}
\end{align}

In the weak Gaussian interference channel ($\mathsf{INR} \leq \mathsf{SNR}$), we can achieve
\begin{align}
\begin{split}
\label{eq:Rsym_weak}
R_{\mathsf{sym}}^{\mathsf{weak}} =
\left\{
  \begin{array}{ll}
    \log \left( 1+ \frac{\mathsf{SNR}}{2\mathsf{INR}} \right) +  \frac{1}{2} \log \left(1 + \mathsf{INR} \right)   - \frac{1}{2}, & \hbox{$\mathsf{INR} \geq 1$;} \\
   \log \left( 1 + \frac{\mathsf{SNR}}{\mathsf{INR}+1}  \right), & \hbox{$\mathsf{INR} \leq 1$.}
  \end{array}
\right.
\end{split}
\end{align}

\end{theorem}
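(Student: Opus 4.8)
The plan is to mimic the two-staged deterministic scheme of Theorem \ref{theorem-linear-symmetric} in the Gaussian setting, replacing "bit levels" by superposition-coded Gaussian codebooks and replacing the noiseless feedback-decoding step by a genie/achievability argument that is valid up to a bounded loss. I would treat the strong regime first, then the weak regime, and within the weak regime handle $\mathsf{INR}\ge 1$ and $\mathsf{INR}\le 1$ separately, because the private/common split collapses when the interference falls below the noise floor.

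\textbf{Strong regime ($\mathsf{INR}\ge\mathsf{SNR}$).} Here all information is common. In stage one, each transmitter sends a fresh codeword $X_k$ drawn i.i.d. $\mathcal{CN}(0,1)$ over a block; in stage two, using the (noiseless) feedback each transmitter forms an estimate of the other user's stage-one message and retransmits it. First I would argue that after stage one transmitter $k$ can decode the other message: from $Y_k$ it sees the other user's codeword through a link of strength $\mathsf{INR}$ while its own transmitted codeword is known to it and can be subtracted, so the effective channel for decoding $W_{\bar k}$ is an AWGN channel of capacity $\tfrac12\log(1+\mathsf{INR})$ — this is exactly the per-user rate claimed, so reliable decoding at the transmitters is consistent. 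In stage two both receivers get a clean look at the desired codeword (the interfering term is now the already-received stage-one symbol, which can be stripped), and combining the two stages each receiver resolves its own message at rate $\tfrac12\log(1+\mathsf{INR})$ per channel use amortized over the two stages. The bookkeeping is a standard block-Markov / two-phase rate-splitting argument; I would present it as sending $B$ blocks of common messages with one extra "refinement" block, so the rate penalty vanishes as $B\to\infty$.

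\textbf{Weak regime ($\mathsf{INR}\ge 1$).} Now I would split $X_k = X_k^{p} + X_k^{c}$ with the private part $X_k^{p}$ allocated power $\tfrac{1}{\mathsf{INR}}$ (so that at the unintended receiver it arrives at roughly the noise level, contributing at most $\tfrac12\log 2$ of extra noise — this is where the $-\tfrac12$ and the factor $2$ in $\tfrac{\mathsf{SNR}}{2\mathsf{INR}}$ come from) and the common part $X_k^c$ the remaining power $\approx 1$. Stage one: each transmitter superimposes a fresh private codeword and a fresh common codeword. Using feedback, each transmitter decodes the other user's common message (treating both private signals as noise, the common message is decodable because $\mathsf{INR}\ge\mathsf{SNR}\cdot\mathsf{INR}/\mathsf{SNR}$... more precisely because the common-to-common link has enough SNR — I'd verify the rate $\tfrac12\log(1+\mathsf{INR})$ is supportable). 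Stage two: each transmitter re-sends the other user's common message on the common level (helping the intended receiver refine its stage-one common symbol with no new interference to the other receiver, exactly as in the deterministic picture) and sends a fresh private message on the private level. Each receiver then (i) decodes the combined common information from the two stages, (ii) subtracts it, (iii) decodes its private messages from the residual, whose SINR is $\approx \tfrac{\mathsf{SNR}/\mathsf{INR}}{1+1} = \tfrac{\mathsf{SNR}}{2\mathsf{INR}}$, and amortizes: one fresh private message per stage plus one common message per two stages gives $R_{\mathsf{sym}} = \tfrac12\log(1+\tfrac{\mathsf{SNR}}{2\mathsf{INR}})\cdot 2 \cdot \tfrac12 + \tfrac12\cdot\tfrac12\log(1+\mathsf{INR})$, which I would massage into the stated expression modulo the constant $-\tfrac12$.

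\textbf{Very weak regime ($\mathsf{INR}\le 1$).} Here interference is below the noise floor, so feedback buys essentially nothing and there is no point splitting off a common part; I would simply treat interference as noise in a point-to-point AWGN channel, giving $R_{\mathsf{sym}} = \log(1+\tfrac{\mathsf{SNR}}{\mathsf{INR}+1})$ directly.

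\textbf{Main obstacle.} The delicate point is \emph{not} the algebra but justifying that the transmitters can reliably decode the other user's (common) message from noisy feedback at precisely the rate the receivers ultimately need — in the Gaussian channel, unlike the deterministic one, this decoding is not automatic, and one must check the chain of rate constraints closes simultaneously at transmitters and receivers. I expect to handle this by a careful choice of the power split and by allowing the standard vanishing block-Markov overhead, and to absorb the real-addition-versus-modulo-addition mismatch into the explicit constants ($-\tfrac12$, the factor $2$) that already appear in the statement. The weak-regime power allocation $P^{p}=1/\mathsf{INR}$ is the one nontrivial design choice; everything else is superposition coding plus two-phase amortization.
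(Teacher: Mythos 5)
Your overall plan (two stages, superposition/Han--Kobayashi split, private power $\lambda_p=\min(1/\mathsf{INR},1)$ at the noise floor, collapse to treat-interference-as-noise when $\mathsf{INR}\le 1$) matches the paper in outline, and your feedback-decoding constraints at the transmitters are the right ones. But there is a genuine gap in how the receivers decode at the end of stage two, and it is precisely the place where the Gaussian channel diverges from the deterministic one.

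You assert that in stage two ``both receivers get a clean look at the desired codeword (the interfering term is now the already-received stage-one symbol, which can be stripped)'' and that joint processing over the two stages then yields rate $\tfrac12\log(1+\mathsf{INR})$. This does not hold if the transmitters simply swap and retransmit. Receiver~1 observes
$Y_1^{(1)} = g_d X_1 + g_c X_2 + Z_1^{(1)}$ and $Y_1^{(2)} = g_d X_2 + g_c X_1 + Z_1^{(2)}$; the ``interfering'' term $g_c X_1$ in stage two is exactly the signal receiver~1 is still trying to decode, so it cannot be stripped, and what you really have is a $2\times 2$ linear system with matrix $\bigl(\begin{smallmatrix} g_d & g_c \\ g_c & g_d \end{smallmatrix}\bigr)$. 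That matrix can be nearly singular (e.g.\ when $g_d$ and $g_c$ are in phase and close in magnitude), and one can check that the corresponding MIMO sum-rate constraint $\log\det(I+HH^{*})$ falls below the required $2\log(1+\mathsf{INR})$ whenever $\mathsf{INR} > 1+\mathsf{SNR}/2$, which is most of the strong regime. So the rate you wrote down is not actually achievable by naive retransmission. The paper's fix --- and the technical heart of the achievability proof --- is to use an \emph{Alamouti} retransmission in stage two (transmitter~1 sends $X_2^{*}$, transmitter~2 sends $-X_1^{*}$), so the effective channel matrix becomes $\bigl(\begin{smallmatrix} g_d & g_c \\ -g_c^{*} & g_d^{*} \end{smallmatrix}\bigr)$, which is orthogonal regardless of the phases. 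Each receiver then projects onto the direction orthogonal to the interference and sees its codeword at effective SNR $\mathsf{SNR}+\mathsf{INR}$, which dominates the transmitter-side feedback constraint $\tfrac12\log(1+\mathsf{INR})$ and closes the loop. In the weak regime the same Alamouti idea is applied to the common codewords only, with fresh private codewords riding alongside in both stages; the private-part SINR calculation you sketch is then correct.

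Two smaller points. First, the paper's scheme is genuinely a two-block scheme, not block-Markov with vanishing overhead; you should not need an asymptotic $B\to\infty$ argument, and invoking one obscures the simplicity the authors emphasize (it is part of how they distinguish themselves from Jiang--Xin--Garg). Second, the transmitter-side common-rate constraint in the weak regime is not $\tfrac12\log(1+\mathsf{INR})$ but $\tfrac12\log\bigl(1+\tfrac{\lambda_c\mathsf{INR}}{\lambda_p\mathsf{INR}+1}\bigr) = \tfrac12\log(1+\mathsf{INR})-\tfrac12$ under the chosen power split; this is exactly where the $-\tfrac12$ in the theorem statement comes from, and it is worth making explicit rather than ``massaging.''
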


\begin{proof}

\begin{figure}[h]
\begin{center}
{\epsfig{figure=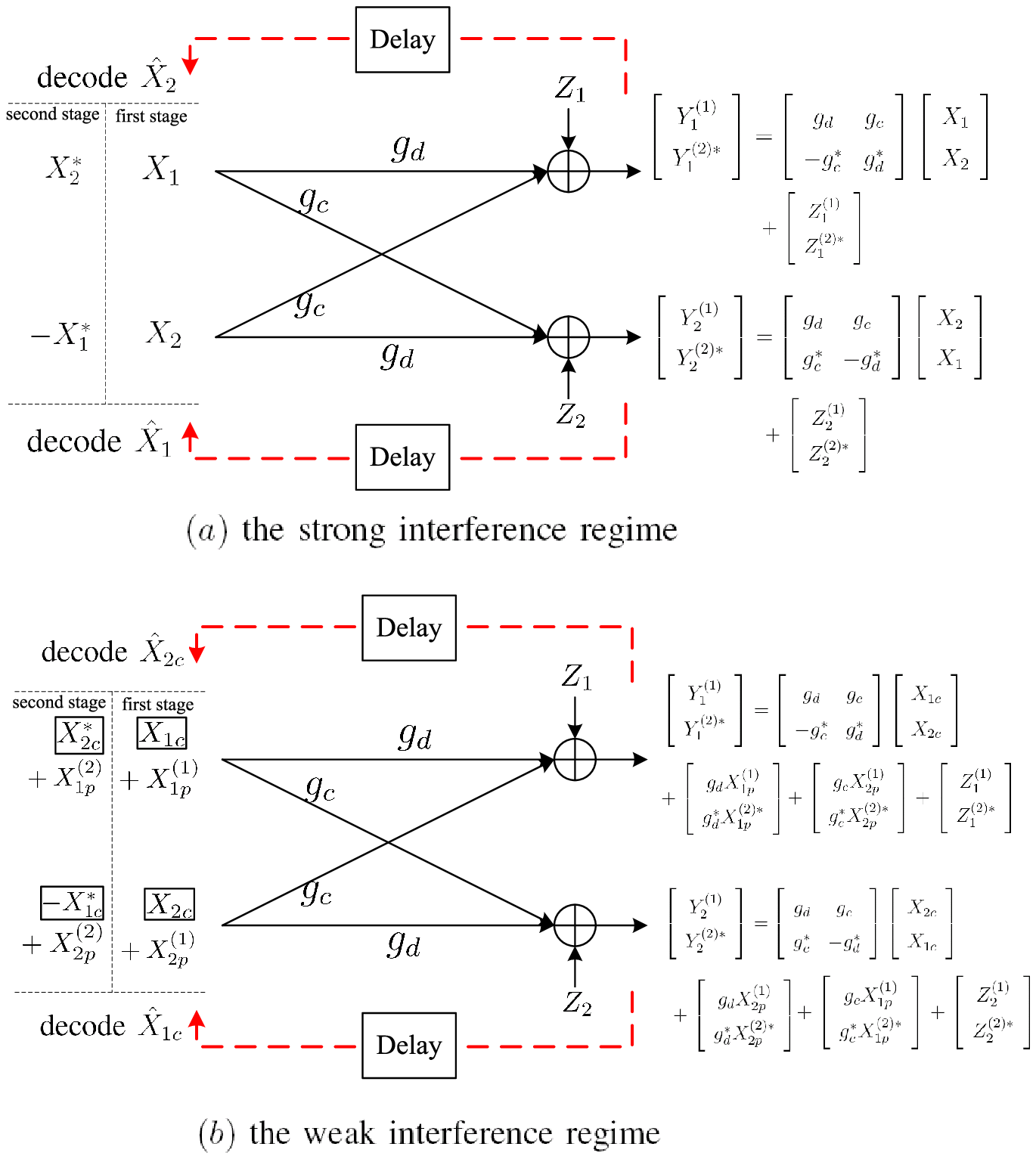, angle=0, width=0.8\textwidth}}
\end{center}
\caption{An achievable scheme of the Gaussian interference channel with feedback.
} \label{fig:Gaussian_example}
\end{figure}

\textbf{The strong interference regime ($\mathsf{INR} \geq \mathsf{SNR}$, Fig. \ref{fig:Gaussian_example} (a)):}
Mimicking the deterministic case, each transmitter sends only common information and employs two stages. In the first stage, each transmitter sends its own signal. In the second stage, each transmitter sends information of the other user after decoding it with the help of feedback. In the Gaussian \emph{noisy} channel, we need to be careful in how to combine the received signals (during two stages) to decode the message.
Alamouti's scheme \cite{Alamouti:JSAC} gives insights into this. Notice that with feedback both messages are available at transmitters in the second stage. However, in spite of knowing both messages, transmitters cannot control the messages already sent in the first stage. Hence, they can \emph{partially} collaborate in the second stage. However, the beauty of Alamouti's scheme is that messages can be designed to be \emph{orthogonal} (for two time slots), although the messages in the first time slot are sent \emph{without any coding}. This was well exploited and pointed out in \cite{Laneman:it03}. Therefore, with Alamouti's scheme, transmitters are able to encode messages so that those are \emph{orthogonal}. In the interference channel, orthogonality between different messages guarantees to completely remove the other message (interference). This helps improving performance significantly.

In the first stage (block), transmitter 1 and 2 send codewords $X_{1}^{N}$ and $X_{2}^{N}$ with rates $R_1$ and $R_2$, respectively. In the second stage, using feedback, transmitter 1 decodes $X_{2}^{N}$ by stripping its own codeword $X_{1}^{N}$.
This can be decoded if
\begin{align}
\label{eq:strongR_2c_constraint}
R_{2}  \leq \frac{1}{2} \log  \left( 1 + \mathsf{INR} \right) \;\;bits/s/Hz.
\end{align}
Similarly transmitter 2 decodes $X_{1}^{N}$; hence, we have the same constraint for $R_{1}$.

Now we apply Alamouti's scheme. In the second stage, transmitter 1 sends $X_{2}^{N*}$ and transmitter 2 sends $-X_{1}^{N*}$. Then, receiver 1 can gather the signals received during two stages. For $1 \leq i \leq N$,
\begin{align}
\left[
  \begin{array}{c}
    Y_{1i}^{(1)} \\
    Y_{1i}^{(2)*} \\
  \end{array}
\right]
= &\left[
    \begin{array}{cc}
      g_d & g_c \\
      -g_c^{*} & g_d^{*} \\
    \end{array}
  \right]
\left[
  \begin{array}{c}
    X_{1i} \\
    X_{2i} \\
  \end{array}
\right] +  \left[
  \begin{array}{c}
    Z_{1i}^{(1)} \\
    Z_{1i}^{(2)*} \\
  \end{array}
\right]
\end{align}
To decode $X_{1i}$, receiver 1 multiplies the row vector orthogonal to the vector corresponding to $X_{2i}$ so we get
\begin{align}
\left[ \begin{array}{cc}
         g_d^{*} & -g_c
       \end{array}
 \right] \left[
  \begin{array}{c}
    Y_{1i}^{(1)} \\
    Y_{1i}^{(2)*} \\
  \end{array}
\right] = (|g_d|^2 + |g_c|^2) X_{1i} + g_d^* Z_{2i}^{(1)} - g_c Z_{1i}^{(2)*}.
\end{align}
Then, the codeword $X_{1}^{N}$ can be decoded if
\begin{align}
\label{eq:strongR_1c_constraint}
R_{1} \leq \frac{1}{2} \log \left( 1 + \mathsf{SNR} +  \mathsf{INR} \right) \;\; bits/s/Hz.
\end{align}
Similar operations are done at receiver 2.
Since  (\ref{eq:strongR_1c_constraint}) is implied by (\ref{eq:strongR_2c_constraint}), we get the desired result (\ref{eq:Rsym_strong}).

\textbf{The weak interference regime ($\mathsf{INR} \leq \mathsf{SNR}$, Fig. \ref{fig:Gaussian_example} (b)):} Similar to the deterministic case, a scheme has two stages and information is split into common and private parts. Also recall that in the deterministic case, only common information is sent twice during two stages. Therefore, a natural idea is to apply Alamouti's scheme only for common information. Private information is newly sent for both stages.

In the first stage, transmitter 1 independently generates a common codeword $X_{1c}^{N}$ and a private codeword $X_{1p}^{N,(1)}$ with rates $R_{1c}$ and $R_{1p}^{(1)}$, respectively. For power splitting, we adapt the idea of the simplified Han-Kobayashi scheme \cite{dtse:it07} where private power is set such that private information is in the noise level:
The scheme is to set
\begin{align}
\begin{split}
\label{eq:weakpowersplit}
\lambda_{p} = \min \left( \frac{1}{\mathsf{INR}}, 1 \right), \;\; \lambda_{c} = 1- \lambda_{p},
\end{split}
\end{align}
Now we assign power $\lambda_{p}$ and $\lambda_{c}$ to $X_{1p,i}^{(1)}$ and $X_{1c,i}$, $\forall i$, respectively; and \emph{superpose} two signals to form  channel input. Similarly transmitter 2 sends $X_{2p}^{N,(1)} + X_{2c}^{N}$. By symmetry, we use the same $\lambda_{p}$ and $\lambda_{c}$.
Similar to the deterministic case, each receiver defers decoding to the second stage.

In the second stage, $Y_1^{N,(1)}$ is available at transmitter 1 by feedback. Transmitter 1 subtracts its own codeword $X_{1p}^{N,(1)}$ and $X_{1c}^{N}$ from $Y_1^{N,(1)}$ and then decodes a common codeword $X_{2c}^{N}$ of the other user. We can decode this if
\begin{align}
\label{eq:weakR_2c_constraint}
R_{2c} \leq \frac{1}{2} \log \left( 1 + \frac{\lambda_c \mathsf{INR} }{ \lambda_p \mathsf{INR} + 1 } \right) \;\; bits/s/Hz.
\end{align}
Similarly transmitter 2 decodes $X_{1c}^{N}$. We have the same constraint for $R_{1c}$.

Now we apply Alamouti's scheme only for common information $X_{1c}^{N}$ and $X_{2c}^{N}$.
Transmitter 1 sends $X_{2c}^{N*}$ and just adds new private information $X_{1p}^{N,(2)}$.  On the other hand, transmitter 2 sends $-X_{1c}^{N*}$ and $X_{2p}^{N,(2)}$. Then, receiver 1 gets

\begin{align}
\begin{split}
\left[
  \begin{array}{c}
    Y_{1i}^{(1)} \\
    Y_{1i}^{(2)*} \\
  \end{array}
\right]
= \left[
    \begin{array}{cc}
      g_d & g_c \\
      -g_c^* & g_d^* \\
    \end{array}
  \right]
\left[
  \begin{array}{c}
    X_{1c,i} \\
    X_{2c,i} \\
  \end{array}
\right]
+ \left[
  \begin{array}{c}
    g_d X_{1p,i}^{(1)} \\
    g_d^* X_{1p,i}^{(2)*} \\
  \end{array}
\right]  +
\left[
  \begin{array}{c}
    g_c X_{2p,i}^{(1)} \\
    g_c^* X_{2p,i}^{(2)*} \\
  \end{array}
\right]
+ \left[
  \begin{array}{c}
    Z_{1i}^{(1)} \\
    Z_{1i}^{(2)*} \\
  \end{array}
\right].
\end{split}
\end{align}

To decode $X_{1c,i}$, we consider
\begin{align}
\begin{split}
\left[ \begin{array}{cc}
         g_d^{*} & -g_c
       \end{array}
 \right] \left[
  \begin{array}{c}
    Y_{1i}^{(1)} \\
    Y_{1i}^{(2)*} \\
  \end{array}
\right] &= (|g_d|^2 + |g_c|^2) X_{1c,i} + |g_d|^2 X_{1p,i}^{(1)} - g_c g_d^* X_{1p,i}^{(2)*} \\
 & + g_d^*g_c X_{2p,i}^{(1)} - |g_c|^2 X_{2p,i}^{(2)*}  + g_d^* Z_{1i}^{(1)} - g_c Z_{1i}^{(2)*}.
\end{split}
\end{align}
$X_{1c}^{N}$ can be decoded if
\begin{align}
\label{eq:weakR_1c_constraint}
R_{1c} \leq  \frac{1}{2} \log \left( 1 + \frac{ \lambda_c \left( \mathsf{SNR} +  \mathsf{INR} \right) }{ \lambda_p \left( \mathsf{SNR} + \mathsf{INR}  \right) + 1 } \right) \;\; bits/s/Hz.
\end{align}
This constraint (\ref{eq:weakR_1c_constraint}) is implied by (\ref{eq:weakR_2c_constraint}).

Similarly receiver 1 can decode $X_{2c}^{N}$, so we have the same constraint for $R_{2c}$.
Now receiver 1 subtracts ($X_{2c}^{N}$, $X_{2c}^{N}$) and then decodes $X_{1p}^{N,(1)}$ and $X_{1p}^{N,(2)}$. This can be decoded if
\begin{align}
\label{eq:weakR_1p_constraint}
R_{1p}^{(1)} \leq \log \left( 1 + \frac{\lambda_p \mathsf{SNR} }{ \lambda_p \mathsf{INR} + 1 } \right) \;\; bits/s/Hz.
\end{align}
Similar operations are done at receiver 2.
Under the simple power setting (\ref{eq:weakpowersplit}), we get the desired result (\ref{eq:Rsym_weak}).

\end{proof}

\textbf{Remarks on the achievable scheme:}
The Alamouti-based phase rotating technique in our scheme looks similar to phase rotating techniques in \cite{Ozarow:it} and \cite{Kramer:it02}. However, it is different because phase rotating in our scheme is to make the desired signal \emph{orthogonal} to interference, while the purpose of \cite{Ozarow:it, Kramer:it02} is to \emph{align} the phase of desired signal to boost power gain. Also our scheme is essentially different from other schemes \cite{Ozarow:it, Kramer:it02} which are based on Schalkwijk-Kailath scheme.
%

\subsection{An Outer Bound}

\begin{theorem}
\label{theorem:GaussianUpperBound}
The symmetric capacity of the Gaussian interference channel with feedback is upper-bounded by

\begin{align}
\begin{split}
\label{eq:GaussianUpperBound}
C_{\mathsf{sym}} \leq \frac{1}{2} \sup_{0 \leq \rho \leq 1} \left[ \log \left(1 + \frac{(1-\rho^2) \mathsf{SNR}}{ 1 + (1-\rho^2)\mathsf{INR}} \right) + \log \left( 1 + \mathsf{SNR} + \mathsf{INR} + 2 \rho \sqrt{\mathsf{SNR} \cdot \mathsf{INR} } \right) \right].
\end{split}
\end{align}

\end{theorem}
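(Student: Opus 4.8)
The plan is to mimic the deterministic converse (the proof of Theorem~\ref{theorem-linear-symmetric}), now tracking differential entropies. Introduce the interference-plus-noise signals $S_{1i} \triangleq g_c X_{1i} + Z_{2i}$ and $S_{2i}\triangleq g_c X_{2i} + Z_{1i}$, the Gaussian counterparts of $V_{1i}$ and $V_{2i}$, so that $Y_{1i} = g_d X_{1i} + S_{2i}$ and $Y_{2i} = g_d X_{2i} + S_{1i}$. First I would record the Gaussian analogues of Claims~\ref{claim-1} and \ref{claim-2}: since each encoder is a deterministic map of its own message and past feedback outputs, a short induction on $i$ shows $X_1^i$ is a function of $(W_1, S_2^{i-1})$ and $X_2^i$ of $(W_2, S_1^{i-1})$; and, repeating the argument of Claim~\ref{claim-2} verbatim with $V_{1i}$ replaced by $S_{1i}$ (using that $X_{2i}$ is determined by $(W_2,Y_2^{i-1})$ and that, given $W_2$, the sequences $Y_2^{i-1}$ and $S_1^{i-1}$ determine each other), one gets $h(S_1^N\mid W_2) = h(Y_2^N\mid W_2)$.

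Next I would run the same chain of steps (a)--(f) as in the deterministic converse. Starting from $N(R_1+R_2)-N\epsilon_N \le I(W_1;Y_1^N\mid W_2) + I(W_2;Y_2^N)$, I expand both mutual informations into differential entropies, then use $h(Y_1^N\mid W_2)\le h(S_1^N\mid W_2) + h(Y_1^N\mid S_1^N,W_2)$ together with the identity above to cancel $h(Y_2^N\mid W_2)$, arriving at $N(R_1+R_2)-N\epsilon_N \le h(Y_1^N\mid S_1^N,W_2) - h(Y_1^N\mid W_1,W_2) + h(Y_2^N)$. For the subtracted term, writing it as $\sum_i h(Y_{1i}\mid Y_1^{i-1},W_1,W_2)$ and noting that $g_d X_{1i}$ is known under this conditioning while $Z_{1i}$ is independent of the past, each summand is at least $h(Z_{1i}) = \log(\pi e)$, so $h(Y_1^N\mid W_1,W_2)\ge N\log(\pi e)$. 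Also $h(Y_2^N)\le \sum_i h(Y_{2i}) \le \sum_i\log\!\big(\pi e(1+\mathsf{SNR}+\mathsf{INR}+2\rho_i\sqrt{\mathsf{SNR}\cdot\mathsf{INR}})\big)$ by Gaussian maximum entropy and Cauchy--Schwarz on the cross term, where $\rho_i$ is the magnitude of the correlation coefficient of $(X_{1i},X_{2i})$.

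The step I expect to be the main obstacle is bounding $h(Y_1^N\mid S_1^N,W_2)$. Here the trick is to condition additionally on $X_2^N$, which is free because $X_2^N$ is a function of $(W_2, S_1^{N-1})$ by the analogue of Claim~\ref{claim-1}; this is precisely what lets the $X_1$--$X_2$ correlation created by feedback enter the bound. Thus $h(Y_1^N\mid S_1^N,W_2) = h(g_d X_1^N + Z_1^N\mid S_1^N,W_2,X_2^N)$, and a chain-rule expansion followed by dropping all conditioning except $S_{1i}$ and $X_{2i}$ gives $\sum_i h(g_d X_{1i}+Z_{1i}\mid S_{1i},X_{2i})$. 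Each such term I would bound by a linear-MMSE estimate plus Gaussian maximum entropy: conditioned on $X_{2i}$, the effective power of $X_{1i}$ is $(1-\rho_i^2)$ times its marginal power, and estimating $g_d X_{1i}+Z_{1i}$ from $S_{1i}=g_c X_{1i}+Z_{2i}$ --- a scaled-plus-noise copy of the \emph{same} $X_{1i}$ --- leaves error variance $1 + \frac{(1-\rho_i^2)\mathsf{SNR}}{1+(1-\rho_i^2)\mathsf{INR}}$. This is exactly where the $(1-\rho^2)$ factor ends up in both numerator and denominator; getting it there (rather than, say, only in the numerator) is the delicate point, and it hinges on the genie signal $S_{1i}$ being a noisy multiple of $X_{1i}$.

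Finally, combining the three pieces, cancelling the $N\log(\pi e)$ terms, and dividing by $N$ gives $R_1+R_2 \le \frac1N\sum_i\big[\log(1+\frac{(1-\rho_i^2)\mathsf{SNR}}{1+(1-\rho_i^2)\mathsf{INR}}) + \log(1+\mathsf{SNR}+\mathsf{INR}+2\rho_i\sqrt{\mathsf{SNR}\cdot\mathsf{INR}})\big]$. Collapsing the per-symbol correlations $\rho_i$ into a single parameter $\rho\in[0,1]$ via concavity of $\log$ and the average-power constraint (the time-sharing step, exactly as with the variable $Q$ in the proof of Theorem~\ref{theorem-linear-symmetric}) yields the $\frac12\sup_{0\le\rho\le1}[\cdots]$ bound, hence the claim since $2C_{\mathsf{sym}}\le R_1+R_2$. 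The only genuinely fiddly part is this last step: keeping the \emph{same} $\rho$ inside both logarithms while averaging, for which one checks that the first logarithm is concave in the conditional power $(1-\rho^2)$ and uses the joint convexity of $(u,v)\mapsto |u|^2/v$ to control the cross-correlation against the average-power budget.
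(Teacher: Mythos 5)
Your plan is the right one in spirit---the genie $S_1^N$, the Gaussian analogues of Claims~\ref{claim-1} and \ref{claim-2}, the linear-MMSE bound on $h(Y_{1i}\mid S_{1i},X_{2i})$, and the time-sharing step all match the paper's proof---but there is a genuine arithmetic gap in the way you feed the genie into the mutual information, and it makes your final bound strictly weaker than the theorem.

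You bound $I(W_1;Y_1^N\mid W_2)$ by replacing only the \emph{positive} entropy $h(Y_1^N\mid W_2)$ with $h(Y_1^N,S_1^N\mid W_2)$, keeping the \emph{negative} term as $-h(Y_1^N\mid W_1,W_2)$, which you then lower-bound by $N\,h(Z_1)$. Tally the constants: $h(Y_1^N\mid S_1^N,W_2)$ contributes $+N\,h(Z_1)$ on top of $\sum_i\log(1+\tfrac{(1-\rho_i^2)\mathsf{SNR}}{1+(1-\rho_i^2)\mathsf{INR}})$, $-h(Y_1^N\mid W_1,W_2)$ contributes $-N\,h(Z_1)$, and $h(Y_2^N)$ contributes $+N\,h(Z_2)$ on top of $\sum_i\log(1+\mathsf{SNR}+\mathsf{INR}+2\rho_i\sqrt{\mathsf{SNR}\,\mathsf{INR}})$. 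Nothing cancels $+N\,h(Z_2)$, so your chain ends at
\begin{align*}
R_1+R_2 \;\le\; h(Z_2) \;+\; \tfrac1N\sum_i\Bigl[\log\bigl(1+\tfrac{(1-\rho_i^2)\mathsf{SNR}}{1+(1-\rho_i^2)\mathsf{INR}}\bigr) + \log\bigl(1+\mathsf{SNR}+\mathsf{INR}+2\rho_i\sqrt{\mathsf{SNR}\,\mathsf{INR}}\bigr)\Bigr],
\end{align*}
which is $\log(\pi e)$ per symbol, i.e.\ about $1.55$ bits/s/Hz on $C_{\mathsf{sym}}$, worse than \eqref{eq:GaussianUpperBound}. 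The paper avoids this by applying the genie to the \emph{mutual information}, $I(W_1;Y_1^N\mid W_2)\le I(W_1;Y_1^N,S_1^N\mid W_2) = h(Y_1^N,S_1^N\mid W_2) - h(Y_1^N,S_1^N\mid W_1,W_2)$, and then evaluating the subtracted joint term \emph{exactly} via Claim~\ref{claim-5}: $h(Y_1^N,S_1^N\mid W_1,W_2)=\sum_i[h(Z_{1i})+h(Z_{2i})]$. That identity is what supplies the second $-N\,h(Z_2)$ needed to normalize $h(Y_2^N)$. Your intermediate inequality $h(Y_1^N\mid W_2)\le h(Y_1^N,S_1^N\mid W_2)$ discards exactly $h(S_1^N\mid Y_1^N,W_1,W_2)$ worth of slack (up to $N\,h(Z_2)$), which is precisely the deficit. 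Separately, your justification of $h(Y_1^N\mid W_1,W_2)\ge N\,h(Z_1)$ is stated as if $Y_{1i}$ becomes pure noise after conditioning, but $g_c X_{2i}$ is not determined by $(Y_1^{i-1},W_1,W_2)$; the bound is still true (condition additionally on $X_{2i}$ and use that $Z_{1i}$ is independent of everything past), but once you switch to the $I(W_1;Y_1^N,S_1^N\mid W_2)$ genie and Claim~\ref{claim-5} you will not need this looser lower bound at all. The remainder of your argument---dropping conditioning to $(S_{1i},X_{2i})$, the MMSE computation giving the $(1-\rho^2)$ factor in both numerator and denominator, and the single-letterization---is fine and matches the paper.
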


\begin{proof}

For side information, we consider a noisy version of $V_1$:
\begin{align}
S_1 = V_1 + Z_2 = g_c X_1 + Z_2.
\end{align}
Using this, we get
\begin{align*}
\begin{split}
N&(R_1 + R_2)= H(W_1) + H(W_2) =  H(W_1|W_2)+ H(W_2) \\
&\leq I(W_1;Y_1^{N}|W_2) + I(W_2;Y_2^{N}) + N \epsilon_N \\
&\overset{(a)}{\leq} I(W_1;Y_1^{N},S_1^{N}|W_2) + I(W_2;Y_2^{N}) + N \epsilon_N \\
&= h(Y_1^{N},S_1^{N}|W_2)  - h(Y_1^{N},S_1^{N}|W_1,W_2) + I(W_2;Y_2^{N}) + N \epsilon_N \\
&\overset{(b)}{=} h(Y_1^{N},S_1^{N}|W_2) - \sum \left[ h(Z_{1i}) + h(Z_{2i}) \right]  + I(W_2;Y_2^{N}) + N \epsilon_N \\
&= h(Y_1^{N}|S_1^{N},W_2)  - \sum  h(Z_{1i})  + h(Y_2^{N}) - \sum h(Z_{2i}) \\
&\quad + \left[h(S_1^{N}|W_2) - h(Y_2^{N}|W_2)\right] +  N \epsilon_N \\
&\overset{(c)}{=} h(Y_1^{N}|S_1^{N},W_2) - \sum  h(Z_{1i}) + h(Y_2^{N})  - \sum h(Z_{2i})  +  N \epsilon_N \\
&\overset{(d)}{=} h(Y_1^{N}|S_1^{N},W_2,X_2^{N}) - \sum h(Z_{1i}) + h(Y_2^{N}) - \sum h(Z_{2i})  +  N \epsilon_N \\
&\overset{(e)}{\leq} \sum_{i=1}^{N} \left[ h(Y_{1i}|S_{1i},X_{2i}) - h(Z_{1i}) + h(Y_{2i}) - h(Z_{2i}) \right] +  N \epsilon_N
\end{split}
\end{align*}
where ($a$) follows from the fact that adding information increases mutual information; ($b$) follows from $h(Y_1^{N},S_1^{N}|W_1,W_2) =\sum \left[ h(Z_{1i}) + h(Z_{2i}) \right]$ (see Claim \ref{claim-5}); $(c)$ follows from  $h(S_1^N|W_2)=h(Y_2^N|W_2)$ (see Claim \ref{claim-4}); ($d$) follows from the fact that $X_2^{N}$ is a function of $(W_2,S_1^{N-1})$ (see Claim \ref{claim-3}); ($e$) follows from the fact that conditioning reduces entropy.

\begin{claim}
\label{claim-5}
$h(Y_1^{N},S_1^{N}|W_1,W_2) = \sum \left[ h(Z_{1i}) + h(Z_{2i}) \right].$
\end{claim}
\begin{proof}

\begin{align*}
\begin{split}
h&(Y_1^{N},S_1^{N}|W_1,W_2) = \sum h(Y_{1i},S_{1i}|W_1,W_2,Y_1^{i-1},S_1^{i-1}) \\
&\overset{(a)}{=} \sum h(Y_{1i},S_{1i}|W_1,W_2,Y_1^{i-1},S_1^{i-1},X_{1i},X_{2i}) \\
&\overset{(b)}{=}  \sum h(Z_{1i},Z_{2i}|W_1,W_2,Y_1^{i-1},S_1^{i-1},X_{1i},X_{2i}) \\
&\overset{(c)}{=}  \sum \left[ h(Z_{1i}) + h(Z_{2i}) \right],
\end{split}
\end{align*}
where ($a$) follows from the fact that $X_{1i}$ is a function of $(W_1, Y_1^{i-1})$ and $X_{2i}$ is a function of $(W_2,S_{1}^{i-1})$ (by Claim \ref{claim-3});
($b$) follows from the fact that $Y_{1i} = g_d X_{1i} + g_c X_{2i} + Z_{1i}$ and $S_{1i} = g_c X_{1i} + Z_{2i}$; ($c$) follows from the memoryless property of the channel and the independence assumption of $Z_{1i}$ and $Z_{2i}$.
\end{proof}

\begin{claim}
\label{claim-4}
$h(S_1^{N}|W_2) = h(Y_2^{N}|W_2).$
\end{claim}
\begin{proof}

\begin{align*}
\begin{split}
h&(Y_2^{N}|W_2) = \sum h(Y_{2i}|Y_2^{i-1},W_2) \\
&\overset{(a)}{=} \sum h(S_{1i}|Y_2^{i-1},W_2) \\
&\overset{(b)}{=}  \sum h(S_{1i}|Y_2^{i-1},W_2,X_2^{i},S_1^{i-1}) \\
&\overset{(c)}{=}  \sum h(S_{1i}|W_2,S_1^{i-1}) =h(S_1^{N}|W_2),
\end{split}
\end{align*}
where ($a$) follows from the fact that $Y_{2i}$ is a function of $(X_{2i},S_{1i})$ and $X_{2i}$ is a function of $(W_2, Y_2^{i-1})$;
($b$) follows from the fact that $X_2^{i}$ is a function of $(W_2,Y_2^{i-1})$ and $S_1^{i-1}$ is a function of $(Y_2^{i-1},X_2^{i-1})$; ($c$) follows from the fact that $Y_{2}^{i-1}$ is a function of $(X_2^{i-1},S_1^{i-1})$ and $X_2^{i}$ is a function of  $(W_2,S_1^{i-1})$ (by Claim \ref{claim-3}).
\end{proof}

\begin{claim}
\label{claim-3}
For all $i\geq 1$, $X_1^{i}$ is a function of $(W_1,S_2^{i-1})$ and $X_2^{i}$ is a function of $(W_2,S_1^{i-1})$.
\end{claim}
\begin{proof}
By symmetry, it is enough to prove only one. Notice that $X_2^{i}$ is a function of ($W_1,W_2,Z_{1}^{i-2},Z_{2}^{i-1})$. In Fig. \ref{fig:GaussianDIC}, we can easily see that the information of $(W_1,Z_{1}^{i-2})$ delivered to the second link must pass through $S_{1}^{i-1}$. Also $S_{1}^{i-1}$ contains $Z_{2}^{i-1}$. Therefore, $X_2^{i}$ is a function of $(W_2,S_1^{i-1})$.
\end{proof}

Now go back to the main stream of the proof. If $(R_1,R_2)$ is achievable, then $\epsilon_N \rightarrow 0$ as $N \rightarrow \infty$. Hence, we get
\begin{align*}
R_1 + R_2 &\leq  h(Y_{1}|S_{1},X_{2}) - h(Z_1) + h(Y_{2}) - h(Z_2).
\end{align*}
Assume that $X_1$ and $X_2$ have covariance $\rho$, i.e., $E[X_1X_2^*]=\rho$. Then, we get
\begin{align}
\begin{split}
\label{eq:entropyofY2}
h(Y_2) 
\leq \log 2 \pi e \left( 1 + \mathsf{SNR} + \mathsf{INR} + 2 |\rho| \sqrt{\mathsf{SNR} \cdot \mathsf{INR}} \right)
\end{split}
\end{align}
Given $(X_2,S_1)$, the variance of $Y_1$ is upper bounded by
\begin{align*}
\begin{split}
\textrm{Var} \left[ Y_1| X_2, S_1 \right] &\leq K_{Y_1} - K_{Y_1 (X_2, S_1)}K_{(X_2,S_1)}^{-1}K_{Y_1 (X_2, S_1)}^{*},
\end{split}
\end{align*}
where
\begin{align}
\begin{split}
K_{Y_1} &= E\left[ |Y_1|^2 \right] = 1  + \mathsf{SNR} + \mathsf{INR} +  \rho g_d^* g_c + \rho^* g_dg_c^* \\
K_{Y_1 (X_2,S_1)} &= E \left[ Y_1 [X_2^*, S_1^*]  \right] = \left[\rho g_d + g_c, g_c^* g_d + \rho^* \mathsf{INR} \right] \\
K_{(X_2,S_1)} & = E \left[ \left[
                             \begin{array}{cc}
                               |X_2|^2 & X_2 S_1^* \\
                               X_2^* S_1 & |S_1|^2 \\
                             \end{array}
                           \right]
 \right] = \left[
             \begin{array}{cc}
               1 & \rho^* g_c^* \\
               \rho g_c & 1+ \mathsf{INR} \\
             \end{array}
           \right].
\end{split}
\end{align}
By further calculation, 
\begin{align}
\begin{split}
\label{eq:conditonalentropy}
h(Y_1|X_2,S_1) 
\leq \log 2 \pi e \left( 1+  \frac{(1-|\rho|^2)\mathsf{SNR}}{1 + (1-|\rho|^2) \mathsf{INR}} \right)
\end{split}
\end{align}
From (\ref{eq:entropyofY2}) and (\ref{eq:conditonalentropy}),  we get the desired upper bound.

\end{proof}

\subsection{Symmetric Capacity to Within 1.7075 Bits}

\begin{theorem}
\label{theorem:SymmetricCapacityGap}
For all channel parameters $\mathsf{SNR}$ and $\mathsf{INR}$, we can achieve all rates $R$ up to $\overline{C}_{\mathsf{sym}}  - 1.7075$. Therefore, the feedback symmetric capacity $C_{\mathsf{sym}}$ satisfies
\begin{align}
\overline{C}_{\mathsf{sym}} - 1.7075 \leq C_{\mathsf{sym}} \leq \overline{C}_{\mathsf{sym}}.
\end{align}
\end{theorem}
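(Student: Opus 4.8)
The plan is to compare, for every value of $\mathsf{SNR}$ and $\mathsf{INR}$, the achievable symmetric rate $R_{\mathsf{sym}}$ from Theorem~\ref{theorem:GaussianAchievable} against the outer bound $\overline{C}_{\mathsf{sym}}$ from Theorem~\ref{theorem:GaussianUpperBound}, and to show that the difference never exceeds $1.7075$ bits. Since the outer bound involves a supremum over $\rho \in [0,1]$, the first step is to obtain a clean upper estimate of $\overline{C}_{\mathsf{sym}}$ that is easy to subtract from. I would bound the two logarithmic terms separately: the second term $\log(1 + \mathsf{SNR} + \mathsf{INR} + 2\rho\sqrt{\mathsf{SNR}\cdot\mathsf{INR}})$ is maximized at $\rho = 1$, giving $\log(1 + (\sqrt{\mathsf{SNR}} + \sqrt{\mathsf{INR}})^2) \leq \log(1 + 2\mathsf{SNR} + 2\mathsf{INR})$; for the first term $\log\!\big(1 + \tfrac{(1-\rho^2)\mathsf{SNR}}{1+(1-\rho^2)\mathsf{INR}}\big)$ one can drop to $\rho=0$ and bound it by $\log\!\big(1 + \tfrac{\mathsf{SNR}}{1+\mathsf{INR}}\big)$, though a little care is needed because the two terms are not simultaneously maximized — it is cleanest to just bound each term by its own maximum over $\rho$, which only inflates $\overline{C}_{\mathsf{sym}}$ and hence makes the gap statement harder to prove but keeps the algebra manageable.

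Next I would split into the two regimes matching Theorem~\ref{theorem:GaussianAchievable}. In the strong regime $\mathsf{INR}\geq\mathsf{SNR}$, we achieve $R_{\mathsf{sym}}^{\mathsf{strong}} = \tfrac12\log(1+\mathsf{INR})$, so the gap is
\[
\overline{C}_{\mathsf{sym}} - R_{\mathsf{sym}}^{\mathsf{strong}} \leq \tfrac12\log\!\Big(1+\tfrac{\mathsf{SNR}}{1+\mathsf{INR}}\Big) + \tfrac12\log\!\big(1+2\mathsf{SNR}+2\mathsf{INR}\big) - \tfrac12\log(1+\mathsf{INR}).
\]
Using $\mathsf{SNR}\leq\mathsf{INR}$, the first term is at most $\tfrac12\log 2$, and $\tfrac12\log\tfrac{1+2\mathsf{SNR}+2\mathsf{INR}}{1+\mathsf{INR}}\leq\tfrac12\log\tfrac{1+4\mathsf{INR}}{1+\mathsf{INR}}\leq\tfrac12\log 4 = 1$, so the gap is at most $\tfrac12 + 1 = 1.5$ bits, comfortably under $1.7075$. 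In the weak regime $\mathsf{INR}\leq\mathsf{SNR}$ one must further subdivide at $\mathsf{INR}=1$ to match the two cases of $R_{\mathsf{sym}}^{\mathsf{weak}}$. For $\mathsf{INR}\geq 1$, substitute $R_{\mathsf{sym}}^{\mathsf{weak}} = \log(1+\tfrac{\mathsf{SNR}}{2\mathsf{INR}}) + \tfrac12\log(1+\mathsf{INR}) - \tfrac12$ and, term by term, compare $\tfrac12\log(1+\tfrac{\mathsf{SNR}}{1+\mathsf{INR}})$ against $\log(1+\tfrac{\mathsf{SNR}}{2\mathsf{INR}})$ and $\tfrac12\log(1+2\mathsf{SNR}+2\mathsf{INR})$ against $\tfrac12\log(1+\mathsf{INR})$; the ratio $\tfrac{1+\mathsf{SNR}/(1+\mathsf{INR})}{(1+\mathsf{SNR}/(2\mathsf{INR}))^2}$ and the ratio $\tfrac{1+2\mathsf{SNR}+2\mathsf{INR}}{1+\mathsf{INR}}$ should each be bounded by elementary manipulations using $\mathsf{INR}\leq\mathsf{SNR}$, and the constant $\tfrac12$ is added back. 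For $\mathsf{INR}\leq 1$, use $R_{\mathsf{sym}}^{\mathsf{weak}} = \log(1+\tfrac{\mathsf{SNR}}{\mathsf{INR}+1})$ and note that this already matches $\tfrac12$ of the first outer-bound term up to a factor, so the leftover is essentially $\tfrac12\log(1+2\mathsf{SNR}+2\mathsf{INR}) - \tfrac12\log(1+\tfrac{\mathsf{SNR}}{\mathsf{INR}+1})$, bounded using $\mathsf{INR}\leq 1$.

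The main obstacle I anticipate is the weak regime with $\mathsf{INR}\geq 1$: there the achievable rate is not a single logarithm, the private-power choice $\lambda_p = 1/\mathsf{INR}$ enters, and both outer-bound terms are of the same order, so a naive term-by-term bound may overshoot $1.7075$ and I will need to track constants carefully — probably by writing the gap as a single expression in the two variables, substituting $t = \mathsf{SNR}/\mathsf{INR} \geq 1$ and $\mathsf{INR}$, and checking that the supremum over the resulting two-dimensional region is attained in a limiting corner (e.g. $\mathsf{INR}\to\infty$ with $t$ fixed, or $t\to\infty$), where it evaluates to a number that one can pin down as at most $1.7075$. It is precisely this worst-corner optimization that forces the somewhat unusual constant $1.7075$ rather than a rounder number, so I expect the bulk of the work — and the tightest slack — to be concentrated there, while the strong regime and the $\mathsf{INR}\leq 1$ weak regime will each leave a wide margin.
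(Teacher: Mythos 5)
Your high-level plan mirrors the paper's: bound the two logarithmic terms of $\overline{C}_{\mathsf{sym}}$ separately over $\rho$ (first term at $\rho=0$, second at $\rho=1$), split into the strong regime, the weak regime with $\mathsf{INR}\geq 1$, and the weak regime with $\mathsf{INR}\leq 1$, and chase constants. Your strong-regime bound of $1.5$ bits is exactly the paper's. The explicit handling of the $\mathsf{INR}\leq 1$ sub-case is an improvement in completeness over the paper, which only writes out the $\mathsf{INR}\geq 1$ form of $R_{\mathsf{sym}}^{\mathsf{weak}}$.

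However, there are two concrete problems in the weak regime with $\mathsf{INR}\geq 1$, which you correctly flag as the critical case but do not resolve. First, the extra AM--GM weakening $1+\mathsf{SNR}+\mathsf{INR}+2\sqrt{\mathsf{SNR}\,\mathsf{INR}} \leq 1+2\mathsf{SNR}+2\mathsf{INR}$ is too lossy to give $1.7075$. After this step the gap, written in your suggested variables $t=\mathsf{SNR}/\mathsf{INR}\geq 1$ and $\mathsf{INR}\to\infty$, becomes
\begin{align*}
\tfrac12\log\frac{8(1+t)^2}{(2+t)^2}+\tfrac12,
\end{align*}
whose supremum over $t\geq 1$ is exactly $\tfrac12\log 8+\tfrac12=2$ bits (attained as $t\to\infty$). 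So your ``worst-corner'' check would conclude the gap is at most $2$, not at most $1.7075$. To get the claimed constant you must retain the cross term $2\sqrt{\mathsf{SNR}\,\mathsf{INR}}$; with it the same limiting expression is $\tfrac12\log\frac{4(1+t)(1+\sqrt{t})^2}{(2+t)^2}+\tfrac12$, whose supremum over $t\geq 1$ is about $1.70$, safely below $1.7075$. Second, the proposed pairing ``compare $\tfrac12\log(1+\tfrac{\mathsf{SNR}}{1+\mathsf{INR}})$ against $\log(1+\tfrac{\mathsf{SNR}}{2\mathsf{INR}})$ and $\tfrac12\log(1+2\mathsf{SNR}+2\mathsf{INR})$ against $\tfrac12\log(1+\mathsf{INR})$'' does not yield finite bounds on the two differences separately: the first difference tends to $-\infty$ and the second to $+\infty$ as $\mathsf{SNR}/\mathsf{INR}\to\infty$. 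The paper instead allocates one copy of $\log(1+\tfrac{\mathsf{SNR}}{2\mathsf{INR}})$ to each of the two outer-bound terms, so that each resulting ratio is bounded by a constant ($2$ for the first, and a constant slightly over $8/3$ for the second). Your fallback — writing the whole gap as a single expression and optimizing over the two-dimensional region — does work, but only if the cross term is kept; carrying out that optimization explicitly is the substance the proposal still owes.
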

\begin{proof}

In the weak interference regime, we get

\begin{align*}
&2(\overline{C}_{\mathsf{sym}} - R_{\mathsf{sym}}^{\mathsf{weak}}) \leq
  \log \left(1 + \frac{\mathsf{SNR}}{ 1 + \mathsf{INR}} \right) -  \log \left(1 + \frac{\mathsf{SNR}}{ 2 \mathsf{INR}} \right) \\
& + \log \left( 1 + \mathsf{SNR} + \mathsf{INR} + 2 \sqrt{\mathsf{SNR} \mathsf{INR}} \right)  - \log \left(1 + \frac{\mathsf{SNR}}{ 2 \mathsf{INR}} \right) \\
& - \log \left( 1 + \mathsf{INR} \right) + 1 \\
& \leq \log \left(  \frac{1 + \mathsf{SNR} + \mathsf{INR} }{1 + \mathsf{INR}}  \frac{2 \mathsf{INR} }{2 \mathsf{INR} +\mathsf{SNR} }  \right) +  \\
& \quad \log \left(  \frac{1 + \mathsf{SNR} + \mathsf{INR} + 2 \sqrt{\mathsf{SNR} \mathsf{INR}} }{ \mathsf{SNR} + 2\mathsf{INR}  }   \frac{2 \mathsf{INR}  }{ 1 + \mathsf{INR} } \right) + 1 \\
& \leq \log 2 + \log \left( \frac{8}{3} \right) + 1 \approx 3.4150
\end{align*}
Therefore, $\overline{C}_{\mathsf{sym}} - R_{\mathsf{sym}}^{\mathsf{weak}} \leq 1.7075$.

In the strong interference regime, we get
\begin{align*}
&2(\overline{C}_{\mathsf{sym}} - R_{\mathsf{sym}}^{\mathsf{strong}}) \leq
  \log \left(1 + \frac{\mathsf{SNR}}{ 1 + \mathsf{INR}} \right)   \\
 & + \log \left(  \frac{1 + \mathsf{SNR} + \mathsf{INR} + 2 \sqrt{\mathsf{SNR} \mathsf{INR}} }{ 1 + \mathsf{INR} } \right)    \\
& \leq \log 2 + \log 4 = 3
\end{align*}
Therefore, $\overline{C}_{\mathsf{sym}} - R_{\mathsf{sym}}^{\mathsf{strong}} \leq 1.5$.
This completes the proof.
\end{proof}

\subsection{Comparison to Related Work \cite{Kramer:it02, Kramer:it04, GastparKramer:06}}
\label{sec:Comparison}

For the Gaussian channel, Kramer developed a feedback strategy based on Schalkwijk-Kailith scheme and Ozarow's scheme. However, since the scheme is not expressed as a closed form, we cannot see how his scheme is close to the approximate symmetric capacity we derived. To see this, we find the generalized degrees-of-freedom of his scheme.

\begin{lemma}
The generalized degrees-of-freedom of Kramer's scheme is given by
\begin{align}
\underline{d}(\alpha) = \left\{
                                \begin{array}{ll}
                                  1- \alpha, & \hbox{$0 \leq \alpha < \frac{1}{3}$;} \\
                                  \frac{3-\alpha}{4}, & \hbox{$\frac{1}{3} \leq \alpha < 1$;} \\
                                  \frac{1+\alpha}{4}, & \hbox{$\alpha \geq 1$.}
                                \end{array}
                              \right.
\end{align}
\end{lemma}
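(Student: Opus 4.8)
The plan is to start from Kramer's achievable symmetric rate for the Gaussian interference channel with feedback (the achievable-rate expression in \cite{Kramer:it02}), which, once specialized to the symmetric case $g_{11}=g_{22}$, $g_{12}=g_{21}$, $P_1=P_2$, can be written as a single-letter expression $R_{\mathrm{Kramer}}(\mathsf{SNR},\mathsf{INR};\rho)$ depending on one free parameter $\rho\in[0,1]$ — the Ozarow/Schalkwijk--Kailath-type correlation coefficient between the two transmitted signals — so that Kramer's rate is $\sup_{0\le\rho\le1} R_{\mathrm{Kramer}}(\mathsf{SNR},\mathsf{INR};\rho)$. To compute the g.d.o.f.\ I would substitute $\mathsf{INR}=\mathsf{SNR}^{\alpha}$, divide by $\log\mathsf{SNR}$, and evaluate the limit $\mathsf{SNR}\to\infty$.

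The key technical device is the correct scaling of the optimization parameter. As $\mathsf{SNR}\to\infty$ the useful choices of $\rho$ approach $1$, so I set $1-\rho^{2}=\mathsf{SNR}^{-\delta}$ with $\delta\ge0$ and treat $\delta$ as the new optimization variable. Under this substitution each logarithmic term in $R_{\mathrm{Kramer}}$, after division by $\log\mathsf{SNR}$, converges to the exponent of its dominant monomial, which is a $\max$/$\min$ of affine functions of $(\alpha,\delta)$. Hence $\underline d(\alpha)=\sup_{\delta\ge0} g(\alpha,\delta)$ for an explicit piecewise-linear $g$, and the problem reduces, for each fixed $\alpha$, to a one-dimensional linear program.

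Solving that program yields the three regimes. For $0\le\alpha<\tfrac13$ the best tradeoff is attained at the boundary value of $\delta$ that just keeps the private stream at the noise floor (analogous to the power split $\lambda_p=1/\mathsf{INR}$ used elsewhere in the paper), giving $1-\alpha$; for $\tfrac13\le\alpha<1$ an interior optimum balances the direct-transmission gain against the correlation (beamforming) gain and gives $\tfrac{3-\alpha}{4}$; for $\alpha\ge1$ the dominant term is the relaying-type contribution and the optimum is $\tfrac{1+\alpha}{4}$. Continuity at the breakpoints is a convenient consistency check: $1-\tfrac13=\tfrac{3-1/3}{4}=\tfrac23$ at $\alpha=\tfrac13$, and $\tfrac{3-1}{4}=\tfrac{1+1}{4}=\tfrac12$ at $\alpha=1$.

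The main obstacle I anticipate is extracting Kramer's rate in a form clean enough for asymptotics: in \cite{Kramer:it02} the optimal correlation is characterized implicitly (it solves a polynomial/fixed-point equation), so one must first show that the optimizing $\rho$ indeed behaves like $1-\mathsf{SNR}^{-\delta}$ for some $\delta\ge0$, and then justify exchanging the limit with the supremum over $\rho$ — equivalently, rule out parameter sequences that scale differently (e.g.\ $\rho$ bounded away from $1$, or $1-\rho$ decaying faster than any polynomial) doing asymptotically better. A secondary, purely bookkeeping difficulty is tracking which monomial dominates each log term across the various $(\alpha,\delta)$ ranges; this is routine but error-prone, and it is where essentially all of the case analysis resides.
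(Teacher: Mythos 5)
There is a genuine gap, and it is in the opening premise. You treat Kramer's rate as $\sup_{0\le\rho\le1}R_{\mathrm{Kramer}}(\mathsf{SNR},\mathsf{INR};\rho)$ and then propose to convert the supremum into a one-dimensional linear program via $1-\rho^{2}=\mathsf{SNR}^{-\delta}$. But $\rho$ is not a free design parameter in Kramer's scheme. As in Ozarow/Schalkwijk--Kailath schemes, the steady-state correlation $\rho^{*}$ is the \emph{unique} root in $[0,1]$ of a fixed-point equation (the quartic obtained by combining (29) of \cite{Kramer:it02} with (77*) of \cite{Kramer:it04}), and the achievable rate is $R(\rho^{*})$ evaluated at that root, not a supremum over all of $[0,1]$. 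The distinction matters: the single-letter expression
\begin{align*}
R=\log\!\left(\frac{1+\mathsf{SNR}+\mathsf{SNR}^{\alpha}+2\rho\,\mathsf{SNR}^{(\alpha+1)/2}}{1+(1-\rho^{2})\,\mathsf{SNR}^{\alpha}}\right)
\end{align*}
is monotone increasing in $\rho$, so a free supremum over $\rho\in[0,1]$ would be attained at $\rho=1$ and would give $\underline d(\alpha)=\max(1,\alpha)$, which is strictly larger than the stated answer (e.g.\ $(1+\alpha)/4$ for $\alpha\ge1$). Your plan would therefore overestimate Kramer's g.d.o.f.\ and not recover the lemma.

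What the paper actually does, and what your proposal is missing, is an asymptotic analysis of the \emph{constraint}: for each regime of $\alpha$ one balances the dominant monomials in the quartic to extract the scaling of $\rho^{*}$ (equivalently $1-\rho^{*2}$) in $\mathsf{SNR}$, namely $\rho^{*}\approx 2\,\mathsf{SNR}^{(3\alpha-1)/2}$ for $0\le\alpha\le\tfrac13$, $1-\rho^{*2}\approx\mathsf{SNR}^{(1-3\alpha)/4}$ for $\tfrac13<\alpha<1$, and $1-\rho^{*2}\approx\mathsf{SNR}^{-(\alpha+1)/4}$ for $\alpha\ge1$; substituting these into $R$ and dividing by $\log\mathsf{SNR}$ yields $1-\alpha$, $(3-\alpha)/4$, and $(1+\alpha)/4$ respectively. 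So the parametrization $1-\rho^{2}=\mathsf{SNR}^{-\delta}$ you introduce is a good way to organize the bookkeeping, but $\delta$ must be \emph{solved for} from the fixed-point equation rather than \emph{maximized over}; without that step the argument computes the wrong quantity.
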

\begin{proof}
Let $\mathsf{INR} = \mathsf{SNR}^{\alpha}$. Then, by (29) in \cite{Kramer:it02} and (77*) in \cite{Kramer:it04}, we get
\begin{align}
R_{\mathsf{sym}} = \log \left( \frac{1 + \mathsf{SNR} + \mathsf{SNR}^{\alpha} + 2\rho^{*} \mathsf{SNR}^{\frac{\alpha+1}{2}} }{ 1 + (1-\rho^{*2}) \mathsf{SNR}^{\alpha}}  \right),
\end{align}
where $\rho^{*}$ is the solution between 0 and 1 such that
\begin{align*}
&2 \mathsf{SNR}^{\frac{3 \alpha +1 }{2}} \rho^{*4} +  \mathsf{SNR}^{ \alpha } \rho^{*3} - 4 ( \mathsf{SNR}^{\frac{3 \alpha +1 }{2}} + \mathsf{SNR}^{\frac{\alpha +1 }{2}} ) \rho^{*2} \\
&- ( 2 + \mathsf{SNR} + 2 \mathsf{SNR}^{ \alpha } ) \rho^{*} + 2 ( \mathsf{SNR}^{\frac{3 \alpha +1 }{2}} + \mathsf{SNR}^{\frac{\alpha +1 }{2}} ) = 0.
\end{align*}
Notice that for $0 \leq \alpha \leq \frac{1}{3}$, $\mathsf{SNR}$ is a dominant term for high $\mathsf{SNR}$ and $\rho^{*}$ is between 0 and 1; hence, we get $\rho^{*} \approx 2 \mathsf{SNR}^{\frac{3 \alpha -1 }{2}}$. From this, we get $\lim_{\mathsf{SNR} \rightarrow \infty} \frac{R_{\mathsf{sym}}}{\log(\mathsf{SNR})}  = 1 - \alpha$. For $\frac{1}{3} <  \alpha < 1$, the first and second dominant terms are $\mathsf{SNR}^{\frac{3 \alpha +1}{2}}$ and $\mathsf{SNR}$. Also for this range, $\rho^{*}$ is very close to 1. Hence, we approximately get $1-\rho^{*2} \approx \mathsf{SNR}^{\frac{-3 \alpha+1}{4}}$. Therefore, we get the desired result for this range.
For $\alpha \geq 1$, note that the first and second dominant terms are $\mathsf{SNR}^{\frac{3 \alpha +1}{2}}$ and $\mathsf{SNR}$; and $\rho^{*}$ is very close to 1. So we get $1-\rho^{*2} \approx \mathsf{SNR}^{-\frac{ \alpha+1}{4}}$. From this, we get the desired result.
\end{proof}

\begin{figure}[h]
\begin{center}
{\epsfig{figure=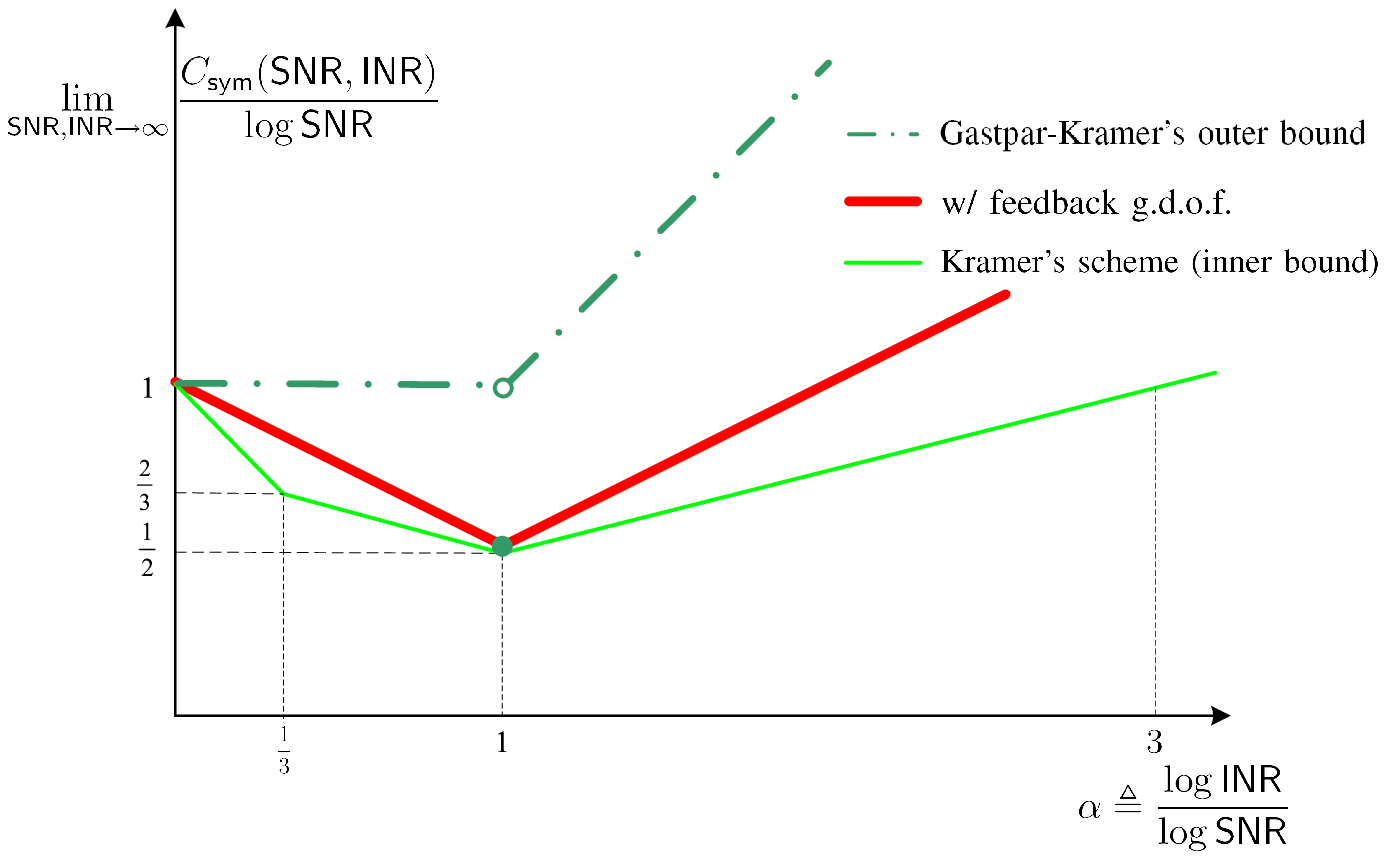, angle=0, width=0.7\textwidth}}
\end{center}
\caption{The generalized degrees-of-freedom of the optimum and Kramer's scheme} \label{fig:gdof-compare}
\end{figure}

Note that in Fig. \ref{fig:gdof-compare} his scheme has the unbounded gap with capacity for all values $\alpha$ except $\alpha = 1$. To compare the scheme to our results for $\alpha = 1$, we also plot the symmetric rate for finite channel parameters as shown in Fig. \ref{fig:ComparisonFinite}. Notice that Kramer's scheme is very close to the outer bound when $\mathsf{INR}$ is similar to $\mathsf{SNR}$. In fact, we can see capacity theorem in \cite{Kramer:it04}, i.e., the Kramer's scheme achieves the capacity when $\mathsf{INR} = \mathsf{SNR} - \sqrt{2 \mathsf{SNR}}$. However, if $\mathsf{INR}$ is quite different from $\mathsf{SNR}$, it becomes far away from the outer bound. Also note that our new bound is much better than Gastpar-Kramer's outer bounds \cite{Kramer:it02, GastparKramer:06}.

\begin{figure}[h]
\begin{center}
{\epsfig{figure=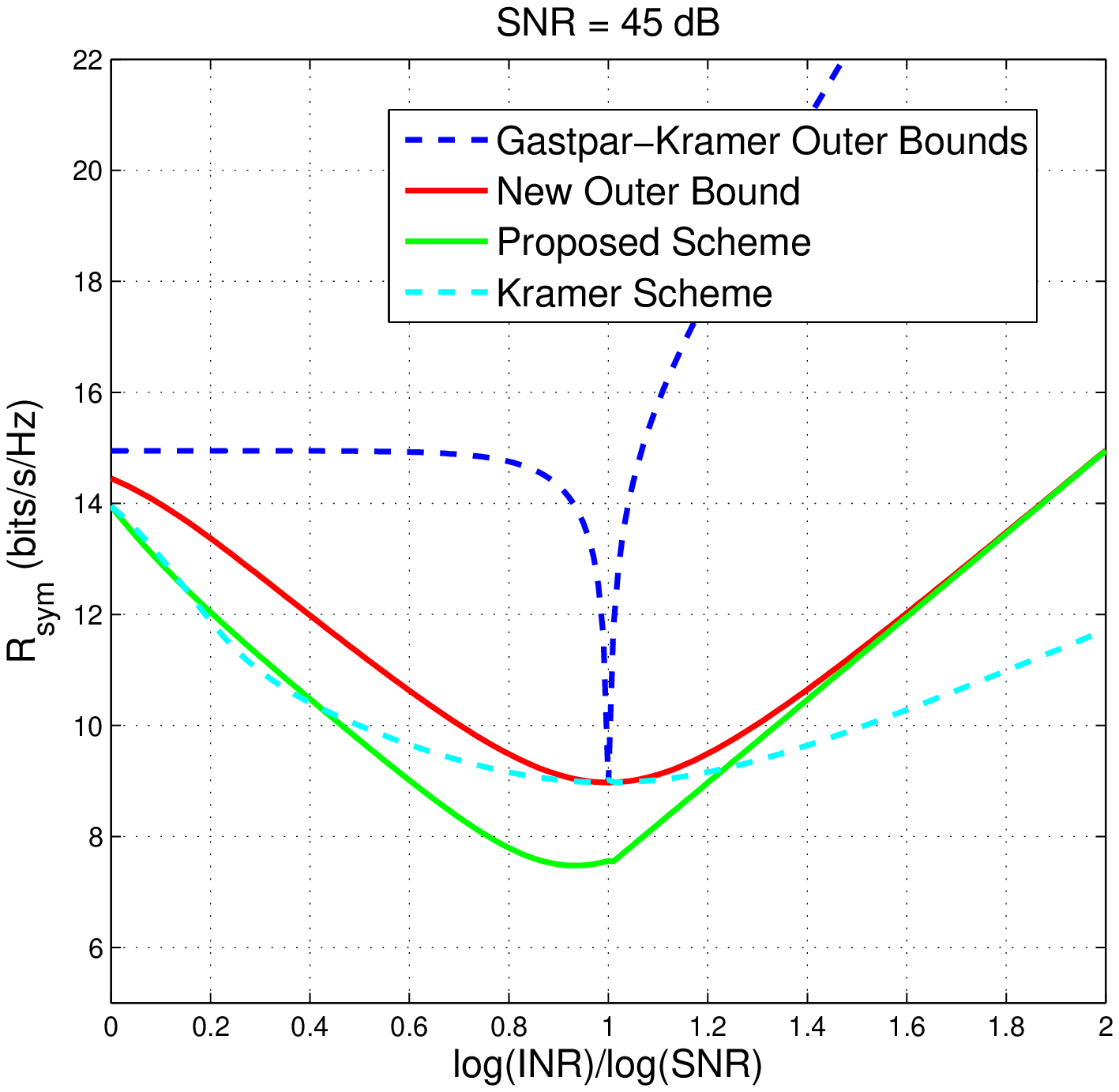, angle=0, width=0.7\textwidth}}
\end{center}
\caption{The symmetric rate of the proposed scheme and Kramer's scheme} \label{fig:ComparisonFinite}
\end{figure}

\section{Discussion and Conclusion}
\label{sec-Conclusion}

Gaining insights from a deterministic model, we found the symmetric capacity to within 1.7075 bits/s/Hz for the two-user Gaussian interference channel with \emph{feedback}. The achievable scheme has two stages and employs the simplified Han-Kobayashi scheme. The second stage is to help refining the corrupted signal received in the first stage. To make desired signals \emph{orthogonal} to interference, we adopted Alamouti's scheme. The constant-gap result is due to a new upper bound.


From this result, we discovered a significant role of feedback that it could provide \emph{unbounded} gain in \emph{many-to-many} channels.
As shown in Fig. \ref{fig:gdof}, we can see feedback gain in two regimes. In the strong interference channel ($\alpha>2$), the gain is because the very strong interference link provides a better alternative path (with the help of feedback) to \emph{relay} information of the other user.
This concept coincides with \emph{correlation routing} in \cite{Kramer:it02}.
The concept is intuitive. On the other hand, in the weak interference channel, there is no better alternative path. However, it turns out that unbounded gain of feedback can be also obtained even in this regime. This is quite surprising because it is counterintuitive.

As a side generalization of a linear deterministic model, we can find the symmetric feedback capacity for a class of deterministic interference channels (El Gamal-Costa's model \cite{ElGamal:it82}). We described the detailed result in Appendix \ref{Appendix:ElGamalCostaModelResult}. Finally, as future work, we need to extend our results into the general \emph{asymmetric} channel with \emph{noisy} feedback.


\appendices

\section{El Gamal-Costa's Deterministic Model}
\label{Appendix:ElGamalCostaModelResult}

El Gamal-Costa's deterministic model \cite{ElGamal:it82} is a generalized version of the linear deterministic model. The channel is described only by the following conditions:
\begin{align}
\begin{split}
H(Y_1|X_1) = H(V_2), \\
H(Y_2|X_2) = H(V_1). \\
\end{split}
\end{align}
Even in this general channel, we can characterize the exact symmetric capacity.

\begin{theorem}
\label{theorem:ElGamalCostaModel}
For El Gamal-Costa's deterministic interference channel, the symmetric feedback capacity is given by
\begin{align*}
\begin{split}
C_{\mathsf{sym}} =  \max_{p(u)p(x_1|u)p(x_2|u)} &\min \left\{ I(U;Y_1) + H(Y_1|V_2,U), \right. \\
& H(Y_2|X_2,U) + H(Y_1|V_1,V_2,U), \\
& \frac{1}{2} \left( H(Y_2) + H(Y_1|V_1,V_2,U) \right), \\
& \left. I(U;Y_1) +  H(Y_1|V_1,U) \right\},
\end{split}
\end{align*}
where $|\mathcal{U}| \leq \min (|\mathcal{V}_1||\mathcal{V}_2|, |\mathcal{Y}_1|,|\mathcal{Y}_2|)$.
\end{theorem}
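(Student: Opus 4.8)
The plan is to establish the two inclusions separately, in each case lifting the linear-deterministic argument behind Theorem~\ref{theorem-linear-symmetric} to the general El~Gamal--Costa channel, with a single auxiliary random variable $U$ playing the role that the explicit ``common bit levels'' played in the linear model.

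\textbf{Converse.} I would run the chain of inequalities from the converse of Theorem~\ref{theorem-linear-symmetric} almost verbatim, but replace the two places where the concrete level structure was used implicitly by the defining hypotheses $H(Y_2|X_2)=H(V_1)$ and $H(Y_1|X_1)=H(V_2)$ directly. Concretely: (i) starting from $N(R_1+R_2)\le I(W_1;Y_1^N|W_2)+I(W_2;Y_2^N)+N\epsilon_N$, adjoin the genie $V_1^N$ as before; the analogue of Claim~\ref{claim-2}, namely $H(V_1^N|W_2)=H(Y_2^N|W_2)$, now follows from $H(Y_2|X_2)=H(V_1)$ together with the feedback functional relations (the analogue of Claim~\ref{claim-1} in the noiseless setting); cancelling that bracket, adjoining $X_2^N$ to the conditioning as in step~$(e)$, and single-letterizing with the time-sharing index $Q$ gives, after identifying $U$ with (a subset of) $(W_2,V_1^{Q-1},Y_1^{Q-1},Q)$, the sum-rate bounds $R_1+R_2\le H(Y_2|X_2,U)+H(Y_1|V_1,V_2,U)$ and $R_1+R_2\le H(Y_2)+H(Y_1|V_1,V_2,U)$; (ii) for the two single-rate bounds I would start from $NR_1\le I(W_1;Y_1^N)+N\epsilon_N$, split the $Y_1^N$-entropy across the ``$V_2$-boundary'' using $H(Y_1|X_1)=H(V_2)$, introduce the same $U$, and obtain $R\le I(U;Y_1)+H(Y_1|V_2,U)$ and, using the symmetric role of $V_1$, $R\le I(U;Y_1)+H(Y_1|V_1,U)$. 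Setting $R=R_1=R_2$ and intersecting the four bounds yields the stated outer region, and the cardinality bound $|\mathcal U|\le\min(|\mathcal V_1||\mathcal V_2|,|\mathcal Y_1|,|\mathcal Y_2|)$ comes from the usual support-lemma (Carath\'eodory) argument on the relevant functionals of $p_{X_1X_2|U}$.

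\textbf{Achievability.} I would reuse the two-stage architecture of the linear-deterministic achievability, now with a Han--Kobayashi split carried by $U$: in stage~1 each transmitter sends a superposition of a common codeword drawn from $p(u)$ and a private codeword drawn from $p(x_k|u)$, and both receivers defer decoding; in stage~2 each transmitter strips its own codeword from its feedback, decodes the other user's common message, and retransmits it while superposing fresh private information. Each receiver then jointly processes its stage-1 and stage-2 observations. The four terms of the $\min$ arise as decoding constraints: the feedback-decoding step at the transmitters contributes a bound of the $H(Y_2|X_2,U)+H(Y_1|V_1,V_2,U)$ type, the receiver's common-then-private decoding over the two combined observations contributes the $I(U;Y_1)+H(Y_1|V_2,U)$ and $I(U;Y_1)+H(Y_1|V_1,U)$ bounds, and the fact that the whole exchange occupies two stages contributes the $\frac{1}{2}\bigl(H(Y_2)+H(Y_1|V_1,V_2,U)\bigr)$ term. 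A routine joint-typicality error analysis closes the scheme and matches the converse.

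\textbf{Main obstacle.} The delicate point is the converse: one must exhibit a \emph{single} auxiliary $U$, built from the feedback-laden past, that simultaneously produces all four bounds with exactly the stated conditioning, and under which $X_1$ and $X_2$ are conditionally independent so that the outer region collapses to the product form $p(u)p(x_1|u)p(x_2|u)$ in the theorem. Since feedback makes $X_{1i}$ and $X_{2i}$ dependent given the common past, verifying (or enforcing, by enlarging $U$) the Markov chain $X_1 - U - X_2$ is the step needing the most care; the remainder is bookkeeping parallel to Claims~\ref{claim-1}--\ref{claim-2} and the converse of Theorem~\ref{theorem-linear-symmetric}.
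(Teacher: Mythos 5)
Your overall roadmap (genie-aided enhancement, cancellation via a Claim-\ref{claim-2} analogue, single-letterization with a time-sharing index, reduction to a product law $p(u)p(x_1|u)p(x_2|u)$) is the right one, and you correctly pinpoint the crux: finding one $U$ under which $X_1$ and $X_2$ are conditionally independent. But your proposed identification of $U$ with ``a subset of $(W_2, V_1^{Q-1}, Y_1^{Q-1}, Q)$'' is not the right object and, as written, would not work. Any $U$ that contains $W_2$ destroys the symmetric role of the two users and cannot satisfy the Markov chain $X_1 - U - X_2$ under a \emph{symmetric} product factorization, since $X_2$ is then essentially determined by $U$; and your list omits $V_2^{Q-1}$, which is needed. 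The paper takes $U_i = (V_1^{i-1}, V_2^{i-1})$, i.e.\ the pair of \emph{past cross-link outputs}, and then proves the conditional independence $p(x_{1i}, x_{2i} | u_i) = p(x_{1i}|u_i) p(x_{2i}|u_i)$ by the product-factorization trick of Willems (rewriting $p(u_i|w_1,w_2)p(u_i|w_1',w_2') = p(u_i|w_1',w_2)p(u_i|w_1,w_2')$). That specific choice is what makes all four bounds materialize with exactly the conditioning in the theorem, and it is not a ``subset'' of the candidate you listed. You also gloss over the single-rate bounds: the paper derives $R_1 \le \sum [I(U_i;Y_{1i}) + H(Y_{1i}|V_{2i},U_i)]$ by first rewriting $H(Y_1^N|W_1)$ as $\sum H(Y_{1i}|Y_1^{i-1},W_1,U_i) = \sum H(V_{2i}|U_i)$ (a separate claim), which is a nontrivial consequence of the feedback functional structure plus the El~Gamal--Costa hypothesis, not just a split ``across the $V_2$-boundary.''

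\textbf{On the achievability.} Here you genuinely depart from the paper, and the departure is a gap, not just a different route. The paper does \emph{not} build a two-stage scheme for the general El~Gamal--Costa model; it invokes the achievable region of Jiang--Xin--Garg for discrete memoryless interference channels with feedback (block Markov encoding plus Han--Kobayashi, with three auxiliaries $U_0,U_1,U_2$), then specializes by setting $U_1=V_1$, $U_2=V_2$, and shows the resulting joint law collapses to $p(u)p(x_1|u)p(x_2|u)$. Your two-stage proposal mimics the linear-deterministic construction, but for a general deterministic channel it is far from routine that two blocks with Alamouti-like retransmission produce precisely the four constraints of the theorem, and the factor-$\frac{1}{2}$ term you attribute to ``two stages'' actually arises, in both the converse and the cited achievability, from a \emph{sum-rate} constraint evaluated at $R_1=R_2$ (note $I(U;Y_2)+H(Y_2|U)=H(Y_2)$), not from a time-sharing penalty. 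Without an explicit error analysis you have no reason to believe the two-stage scheme achieves the full region; the paper sidesteps this difficulty entirely by citing an existing general-purpose region. To make your proof complete you would need either to carry out and verify that analysis, or to switch to the Jiang--Xin--Garg specialization as the paper does.
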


\textbf{Achievability Proof}: We can adapt the result in \cite{Jiang:07} which found an achievable rate region in discrete memoryless interference channels with feedback. Specializing to the symmetric case, the achievable region is given by the union of all  $R_{\mathsf{sym}}$ for which
\begin{align}
\begin{split}
\label{eq:DMICF_achievable}
    R_{\mathsf{sym}} &\leq I(U_0,Y_1) + I(U_1,X_1;Y_1|U_2,U_0) \\
    R_{\mathsf{sym}} &\leq I(U_1;Y_2|X_2,U_2,U_0) + I(X_1;Y_1|U_2,U_1,U_0) \\
    R_{\mathsf{sym}} &\leq \frac{1}{2}\left( I(U_0;Y_2) + I(U_1,U_2,X_2;Y_2|U_0) + I(X_1;Y_1|U_1,U_2,U_0)\right) \\
    R_{\mathsf{sym}} &\leq I(U_0,Y_1) + I(U_2,X_1;Y_1|U_1,U_0)
\end{split}
\end{align}
over all joint distributions
\begin{align}
\begin{split}
\label{eq:complexjpdf}
p(u_0,u_1,u_2,x_1,x_2)=p(u_0)p(u_1|u_0)p(u_2|u_0) p(x_1|u_1,u_0) p(x_2|u_2,u_0),
\end{split}
\end{align}
where $u_0$, $u_1$, and $u_2$ are the realizations of three auxiliary random variables
$U_0$, $U_1$, and $U_2$ defined on arbitrary finite sets $\mathcal{U}_0$, $\mathcal{U}_1$, and $\mathcal{U}_2$.

The idea of the scheme is to combine the block Markov encoding \cite{Cover:it79, Cover:it81} and the Han-Kobayashi scheme \cite{HanKoba:it81}.
Specializing to the deterministic case, we set
\begin{align}
\label{eq:ViUi}
U_1 = V_1, \; U_2 = V_2.
\end{align}
Since the role of $U_0$ is to reflect feedback, we still need this random variable. We replace $U_0$ with $U$ for notational simplicity.

Now let us check a joint distribution. Using functional relationship between $X_i$ and $V_i$, we can simplify (\ref{eq:complexjpdf}) into the form of $p(u,x_1,x_2)$.
The idea is to write $p(u,v_1,v_2,x_1,x_2)$ into two different ways:
\begin{align}
\begin{split}
\label{eq:jointdist}
p(u,v_1,v_2,x_1,x_2) &= p(u)p(x_1|u)p(x_2|u) \delta (v_1 - g_1(x_1)) \delta (v_2-g_2(x_2) \\
&= p(u)p(v_1|u)p(v_2|u) p(x_1|v_1,u) p(x_2|v_2,u)
\end{split}
\end{align}
where $\delta(\cdot)$ indicates the Kronecker delta function and
\begin{align}
p(x_1|v_1,u) := \frac{p(x_1|u) \delta(v_1 - g_1(x_1))}{p(v_1|u)}, \\
p(x_2|v_2,u) := \frac{p(x_2|u) \delta(v_2 - g_2(x_1))}{p(v_2|u)}.
\end{align}
Therefore, knowing only $p(u,x_1,x_2)$ is enough to generate codebook.
From (\ref{eq:ViUi}) and (\ref{eq:jointdist}), we can get the desired form of the joint distribution. Therefore, we establish the achievability proof.

\textbf{Converse Proof}:
One of the main points is how to introduce an auxiliary random variable $U$. We choose
$U_i = (V_1^{i-1},V_2^{i-1})$. In Claim \ref{claim:converseIndependence}, we will show that given $U_i$, $X_{1i}$ and $X_{2i}$ are conditionally independent.

Consider the first upper bound.

\begin{align*}
\begin{split}
N&R_1 = H(W_1) \overset{(a)} \leq I(W_1;Y_1^{N}) +  N \epsilon_N\\
&\overset{(b)}{\leq} \sum \left[ I(Y_{1i};U_i) + H(Y_{1i}|U_i) \right] - \sum H(Y_{1i}|Y_1^{i-1},W_1,U_i)  + N \epsilon_N \\
&\overset{(c)}{\leq} \sum \left[ I(Y_{1i};U_i) + H(Y_{1i},V_{2i}|U_i) - H(V_{2i}|U_i) \right]  + N \epsilon_N \\
&= \sum \left[ I(Y_{1i};U_i) + H(Y_{1i}|V_{2i},U_i) \right]   + N \epsilon_N,
\end{split}
\end{align*}
where $(a)$ follows from Fano's inequality; $(b)$ follows from the fact that conditioning reduces entropy and $H(Y_1^N|W_1)=\sum H(Y_{1i}|Y_1^{i-1},W_1,U_i)$ (see Claim \ref{claim:converserelation}); and ($c$) follows from the fact that $\sum H(V_{2i}|U_i) = \sum H(Y_{1i}|Y_1^{i-1},W_1,U_i)$ (see Claim \ref{claim:converserelation}) and adding information increases mutual information.

Now consider the second upper bound.

\begin{align*}
\begin{split}
N&R_1 = H(W_1) = H(W_1|W_2) \leq I(W_1;Y_1^N|W_2) + N \epsilon_N \\
&\leq \sum H(Y_{1i}|Y_1^{i-1},W_2) + N \epsilon_N \\
&\overset{(a)}{=} \sum H(Y_{1i}|Y_1^{i-1},W_2,X_{2i},U_i) + N \epsilon_N \\
&\overset{(b)}{\leq} \sum H(Y_{1i},Y_{2i}|Y_1^{i-1},W_2,X_{2i},U_i) + N \epsilon_N \\
&\overset{(c)}{=} \sum H(Y_{2i}|Y_1^{i-1},W_2,X_{2i},U_i) + \sum H(Y_{1i}|Y_1^{i-1},W_2, X_{2i},U_{i},Y_{2i},V_{1i}) + N \epsilon_N \\
&\overset{(d)}{\leq} \sum \left[ H(Y_{2i}|X_{2i}, U_i) + H(Y_{1i}|V_{1i},V_{2i},U_i) \right] + N \epsilon_N
\end{split}
\end{align*}
where ($a$) follows from the fact that $X_{2}^{i}$, $V_2^i$, $X_1^{i-1}$, $V_1^{i-1}$ are functions of $(W_2,Y_1^{i-1})$, $X_2^i$,  $(Y_1^{i-1},V_2^{i-1})$, $X_1^{i-1}$, respectively (see Claim \ref{claim:converseFunctional}); ($b$) follows from the fact that adding information increases entropy; ($c$) follows from the chain rule and $V_{1i}$ is a function of $(Y_{2i},X_{2i})$; ($d$) is because conditioning reduces entropy.

Consider the third one.
\begin{align*}
\begin{split}
N&(R_1 + R_2)= H(W_1) + H(W_2) =  H(W_1|W_2)+ H(W_2) \\
&\leq I(W_1;Y_1^{N}|W_2) + I(W_2;Y_2^{N}) + N \epsilon_N \\
&= H(Y_1^{N}|W_2) + I(W_2;Y_2^{N}) + N \epsilon_N \\
&\leq H(Y_1^{N},V_1^{N}|W_2) + I(W_2;Y_2^{N}) + N \epsilon_N \\
&=H(Y_1^{N}|V_1^{N},W_2) + H(Y_2^{N}) + \left[H(V_1^{N}|W_2) - H(Y_2^{N}|W_2)\right] +  N \epsilon_N \\
&\overset{(a)}{=} H(Y_1^{N}|V_1^{N},W_2) + H(Y_2^{N}) +  N \epsilon_N \\
&\overset{(b)}{=} H(Y_1^{N}|V_1^{N},W_2,X_2^{N},V_2^{N}) + H(Y_2^{N}) +  N \epsilon_N \\
&\overset{(c)}{\leq} \sum \left[ H(Y_{1i}|V_{1i},V_{2i},U_i) + I(U_i;Y_{2i}) + H(Y_{2i}|U_i) \right] + N \epsilon_N
\end{split}
\end{align*}
where ($a$) follows from $H(V_1^N|W_2)=H(Y_2^N|W_2)$ (by Claim \ref{claim-2}); ($b$) follows from the fact that $X_2^{N}$ is a function of $(W_2,V_1^{N-1})$ (by Claim \ref{claim-1}) and $V_2^{N}$ is a function of $X_2^{N}$; ($c$) follows from the fact that conditioning reduces entropy.

Now consider the last part.
\begin{align*}
\begin{split}
n&(R_1 + R_2)= H(W_1) + H(W_2) \leq I(W_1;Y_1^{N}) + I(W_2;Y_2^{N}) + N \epsilon_N \\
&\overset{(a)}{\leq} \sum \left[ I(U_i;Y_{1i}) + H(Y_{1i}|U_i) -H(Y_{1i}|Y_1^{i-1},W_1,U_i)  \right]  \\
& + \sum \left[ I(U_i;Y_{2i}) + H(Y_{2i}|U_i) -H(Y_{2i}|Y_2^{i-1},W_2,U_i)  \right]  + N \epsilon_N \\
&\overset{(b)}{\leq} \sum \left[ I(U_i;Y_{1i}) + H(Y_{1i},V_{1i}|U_i) -H(V_{2i}|U_i)  \right]  + \sum \left[ I(U_i;Y_{2i}) + H(Y_{2i},V_{2i}|U_i) -H(V_{1i}|U_i)  \right]  + N \epsilon_N \\
&= \sum \left[ I(U_i;Y_{1i}) + H(Y_{1i}|V_{1i},U_i) \right]  + \sum \left[ I(U_i;Y_{2i}) + H(Y_{2i}|V_{2i},U_i)  \right]  + N \epsilon_N \\
\end{split}
\end{align*}
where ($a$) follows from the fact that conditioning reduces entropy and $H(Y_2^N|W_2)= \sum H(Y_{2i}|Y_2^{i-1},W_2,U_i)$ by Claim \ref{claim:converserelation}; ($b$) follows from the fact that $\sum H(V_{2i}|U_i) = \sum H(Y_{1i}|Y_{1}^{i-1},W_1,U_i)$, $\sum H(V_{1i}|U_i) = \sum H(Y_{2i}|Y_{2}^{i-1},W_2,U_i)$ (by Claim \ref{claim:converserelation}), and adding information increases entropy.

Let $Q$ be the time index uniformly distributed over the set $\{1,2,\cdots,N \}$ and independent of $(W_1,W_2,X_1^N,X_2^N,Y_1^N,Y_2^N)$.
Define $X_1 = X_{1Q}$, $V_1 = V_{1Q}$, $X_2 = X_{2Q}$, $V_2 = V_{1Q}$, $Y_1 = Y_{1Q}$, $Y_2 = Y_{2Q}$, $U = (U_Q, Q)$. If $(R_1,R_2)$ is achievable, then $\epsilon_N \rightarrow 0$ as $N \rightarrow \infty$. Hence, we obtain
\begin{align*}
R_1 &\leq I(Y_{1};U) + H(Y_{1}|V_{2},U) ],\\
R_1 &\leq H(Y_{2}|X_{2},U) + H(Y_1|V_1,V_2,U),\\
R_1 + R_2 &\leq  I(U;Y_2) + H(Y_{1}|V_{1},V_{2},U) + H(Y_{2}|U),\\
R_1 + R_2 &\leq  I(U;Y_1) + I(U;Y_2) + H(Y_{1}|V_{1}) + H(Y_{2}|V_{2}).
\end{align*}
By Claim \ref{claim:converseIndependence}, $X_1$ and $X_2$ are conditionally independent given $U$. Therefore, $\exists p(u,x_1,x_2)=p(u)p(x_1|u)p(x_2|u)$ such that the desired inequalities hold. This establishes the converse.

\textbf{Several Claims for the Converse Proof}:

\begin{claim}
\label{claim:converseIndependence}
Given $U_i = (V_1^{i-1},V_2^{i-1})$, $X_{1i}$ and $X_{2i}$ are conditionally independent. Consequently,
\begin{align}
H(W_1 | W_2,U)&= H(W_1|U),\\
H(X_1 | X_2,U)&= H(X_1|U),\\
H(V_1 | V_2,U)&= H(V_1|U).
\end{align}
\end{claim}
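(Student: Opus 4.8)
The plan is to reduce the whole statement to the single assertion that, for every $i \ge 1$, the messages $W_1$ and $W_2$ are conditionally independent given $U_i = (V_1^{i-1}, V_2^{i-1})$. Granting this, the rest is immediate. By Claim \ref{claim-1} the block $X_1^i$ --- hence $X_{1i}$, and hence $V_{1i}$, which is a function of $X_{1i}$ --- is a function of $(W_1, V_2^{i-1})$, and symmetrically $X_{2i}$ and $V_{2i}$ are functions of $(W_2, V_1^{i-1})$. Therefore, once we condition on $U_i$, the variable $X_{1i}$ (respectively $V_{1i}$) is a deterministic function of $W_1$ alone and $X_{2i}$ (respectively $V_{2i}$) of $W_2$ alone, and since conditional independence is preserved under such coordinatewise maps we get $X_{1i} \perp X_{2i} \mid U_i$ and $V_{1i} \perp V_{2i} \mid U_i$. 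Passing to entropies yields $H(X_1|X_2,U) = H(X_1|U)$, $H(V_1|V_2,U) = H(V_1|U)$ and $H(W_1|W_2,U) = H(W_1|U)$ for each fixed $i$; these also hold after appending the time-sharing index $Q$ and passing to $U = (U_Q,Q)$, since conditioning further on $\{Q = i\}$ leaves the corresponding slice unchanged.

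It thus remains to prove $W_1 \perp W_2 \mid U_i$, which I would do by induction on $i$. For $i = 1$, $U_1$ is the empty sequence and the claim reduces to the assumed independence $W_1 \perp W_2$ of the two messages. For the inductive step, use Claim \ref{claim-1} to write $V_{1i} = \phi_1(W_1, V_2^{i-1})$ and $V_{2i} = \phi_2(W_2, V_1^{i-1})$ for fixed deterministic maps $\phi_1, \phi_2$ (legitimate because the channel is noiseless), so that the joint law decomposes as
\[
p(w_1, w_2, v_1^{i}, v_2^{i}) = p(w_1, w_2, v_1^{i-1}, v_2^{i-1}) \, \mathbf{1}\left\{ v_{1i} = \phi_1(w_1, v_2^{i-1}) \right\} \mathbf{1}\left\{ v_{2i} = \phi_2(w_2, v_1^{i-1}) \right\}.
\]
Substituting the inductive hypothesis $p(w_1, w_2 \mid v_1^{i-1}, v_2^{i-1}) = p(w_1 \mid v_1^{i-1}, v_2^{i-1}) \, p(w_2 \mid v_1^{i-1}, v_2^{i-1})$ and holding $(v_1^{i}, v_2^{i})$ fixed, the right-hand side takes the form $C \, A(w_1) \, B(w_2)$ for a constant $C$, where $A$ gathers $p(w_1 \mid v_1^{i-1}, v_2^{i-1})$ together with the first indicator and $B$ gathers $p(w_2 \mid v_1^{i-1}, v_2^{i-1})$ together with the second; after normalization this is exactly $p(w_1 \mid v_1^{i}, v_2^{i}) \, p(w_2 \mid v_1^{i}, v_2^{i})$, which closes the induction.

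The step needing care is precisely this product decomposition, and it rests on Claim \ref{claim-1}: the newly revealed symbol $V_{1i}$ depends on $W_2$ only through $V_2^{i-1}$, which is already conditioned on --- intuitively, feedback to transmitter $1$ about $W_2$ is funneled entirely through the cross-outputs $V_2$ --- so, with $(v_1^{i}, v_2^{i})$ frozen, the first indicator is a function of $w_1$ alone, and symmetrically for the second. Were transmitter $1$ able to learn something about $W_2$ not summarized by $V_2^{i-1}$, the two indicators would couple $w_1$ and $w_2$ and the product structure would fail. An equivalent bookkeeping is to expand $I(W_1; W_2 \mid V_1^{i}, V_2^{i})$ by the chain rule and, using the inductive hypothesis and the same functional relations, collapse it to $I(V_{1i}; V_{2i} \mid V_1^{i-1}, V_2^{i-1})$, which vanishes because given $U_i$ the symbols $V_{1i}, V_{2i}$ are functions of the conditionally independent $W_1, W_2$; I would use whichever is shorter to write down.
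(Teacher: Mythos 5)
Your proof is correct and establishes exactly what is needed, but the route is different from the paper's. The paper follows Willems' technique \cite{Willems:it82}: it establishes the ``interchangeability'' identity $p(u_i|w_1,w_2)p(u_i|w_1',w_2') = p(u_i|w_1',w_2)p(u_i|w_1,w_2')$ by writing $p(v_1^{i-1},v_2^{i-1}|w_1,w_2)$ as a product of one-step conditionals (each reducible, via Claim~\ref{claim-1}, to $p(v_{1j}|v_2^{j-1},w_1)\,p(v_{2j}|v_1^{j-1},w_2)$) and then rearranging factors; the conditional independence of $W_1, W_2$ given $U_i$ is deduced from this identity by a Bayes-rule manipulation. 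You instead argue by induction on $i$: the base case is the prior independence of the messages, and the inductive step factors $p(w_1,w_2,v_1^i,v_2^i)$ as the $(i{-}1)$-st joint times two indicator functions, one depending on $w_1$ and one on $w_2$ once the $v$-history is frozen, again by Claim~\ref{claim-1}. The two arguments are anchored on the same structural fact (cross-information is funneled through the $V$'s), and both treat the deterministic nature of the channel correctly; yours avoids the slightly magical product-swap identity at the cost of carrying an inductive hypothesis, and is arguably the more transparent derivation. Your wrap-up --- deducing $X_{1i}\perp X_{2i}\mid U_i$ and $V_{1i}\perp V_{2i}\mid U_i$ from $W_1\perp W_2\mid U_i$ via the coordinatewise functional dependence, and lifting to $U=(U_Q,Q)$ --- matches the paper's and is handled correctly.
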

\begin{proof}
The idea is based on the technique used in \cite{Willems:it82}.
For completeness we describe it thoroughly. For two arbitrary message pairs $(w_1,w_2)$ and $(w_1',w_2')$, we obtain the following relationship:
\begin{align}
\begin{split}
\label{eq-distributionrelation}
&p(u_i|w_1,w_2)p(u_i|w_1',w_2') =p(v_1^{i-1},v_2^{i-1}|w_1,w_2)p(v_1^{i-1},v_2^{i-1}|w_1',w_2') \\
&\overset{(a)}{=}\prod_{j=1}^{i-1} p(v_{1j}|v_1^{j-1},w_1,w_2) p(v_{2j}|v_2^{j-1},v_1^{i-1},w_1,w_2) \cdot p(v_{1j}|v_1^{j-1},w_1',w_2') p(v_{2j}|v_2^{j-1},v_1^{i-1},w_1',w_2') \\
&\overset{(b)}{=} \prod_{j=1}^{i-1} p(v_{1j}|v_2^{j-1},w_1) p(v_{2j}|v_1^{j-1},w_2) \cdot p(v_{1j}|v_2^{j-1},w_1') p(v_{2j}|v_1^{j-1},w_2') \\
&\overset{(c)}{=} \prod_{j=1}^{i-1} p(v_{1j}|v_2^{j-1},w_1') p(v_{2j}|v_1^{j-1},w_2)  \cdot p(v_{1j}|v_2^{j-1},w_1) p(v_{2j}|v_1^{j-1},w_2') \\
&= p(u_i|w_1',w_2)p(u_i|w_1,w_2'),
\end{split}
\end{align}
where ($a$) follows from the chain rule; ($b$) follows from Claim \ref{claim-1}; ($c$) follows from rearranging a product order.

Using this, we obtain
\begin{align*}
\begin{split}
&p(w_1,w_2|u_i) = \frac{p(w_1)p(w_2)p(u_i|w_1,w_2)}{p(u_i)} \\
&= \frac{p(w_1)p(w_2)p(u_i|w_1,w_2)}{p(u_i)} \cdot \frac{\sum_{w_1'}\sum_{w_2'} p(w_1')p(w_2')p(u_i|w_1',w_2')}{p(u_i)} \\
&\overset{(a)}{=} \frac{\sum \sum p(w_1)p(w_2) p(w_1')p(w_2')p(u_i|w_1',w_2) p(u_i|w_1,w_2')}{p(u_i)p(u_i)} \\
& = \frac{\sum_{w_2'} p(w_1) p(w_2')p(u_i|w_1,w_2')}{p(u_i)} \cdot \frac{\sum_{w_1'} p(w_2)p(w_1')p(u_i|w_1',w_2)}{p(u_i)} \\
& = p(w_1|u_i) \cdot p(w_2|u_i),
\end{split}
\end{align*}
where $(a)$ follows from (\ref{eq-distributionrelation}). This proves the independence of $W_1$ and $W_2$ given $u_i$.

Also it follows easily that
\begin{align}
\begin{split}
&p(x_{2i}|u_i,x_{1i}) \overset{(a)}{=} p(f^{i}(W_2,v_1^{i-1})|v_1^{i-1},v_2^{i-1},f^{i}(W_1,v_2^{i-1})) \\
&\overset{(a)}{=}p(f^{i}(W_2,v_1^{i-1})|v_1^{i-1},v_2^{i-1}) =p(x_{2i}|u_i),
\end{split}
\end{align}
where $(a)$ follows from Claim \ref{claim-1} and $(b)$ follows from the independence of $W_1$ and $W_2$ given $u_i$. This
implies that $x_{1i}$ and $x_{2i}$ are independent given $u_i$. Since $v_{1i},v_{2i}$ are functions of $x_{1i},x_{2i}$, respectively, $v_{1i}$ and $v_{2i}$ are also independent given $u_i$.
Therefore, we complete the proof.
\end{proof}

\begin{claim}
\label{claim:converserelation}
$H(Y_2^{N}|W_2) = \sum H(Y_{2i}|Y_{2}^{i-1},W_2,U_i) = \sum H(V_{1i}|U_i).$
\end{claim}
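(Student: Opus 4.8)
\textbf{Proof proposal for Claim \ref{claim:converserelation}.}

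The plan is to establish the two equalities separately, essentially by the same telescoping/function-of-the-past argument already used in Claim \ref{claim-2}, but carrying the auxiliary variable $U_i = (V_1^{i-1},V_2^{i-1})$ through the chain. First I would write $H(Y_2^N|W_2) = \sum_i H(Y_{2i}|Y_2^{i-1},W_2)$ by the chain rule. The goal for the first equality is to show each term equals $H(Y_{2i}|Y_2^{i-1},W_2,U_i)$, i.e. that conditioning additionally on $U_i$ changes nothing. This is immediate once we observe that $U_i = (V_1^{i-1},V_2^{i-1})$ is a deterministic function of $(W_2, Y_2^{i-1})$: indeed $V_2^{i-1}$ is a function of $X_2^{i-1}$, which is a function of $(W_2,Y_2^{i-2})$, and $V_1^{i-1}$ is recovered from $(Y_2^{i-1},X_2^{i-1})$ since $Y_{2j}$ is a function of $(X_{2j},V_{1j})$. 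Hence $U_i$ is already determined by the conditioning, so adding it is free. This reuses exactly the functional relationships invoked in step (a)-(c) of Claim \ref{claim-2} and Claim \ref{claim-1}.

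For the second equality, I would proceed as in Claim \ref{claim-2}. From $H(Y_{2i}|Y_2^{i-1},W_2,U_i)$, first replace $Y_{2i}$ by $V_{1i}$: since $Y_{2i} = X_{2i} \oplus V_{1i}$ (modulo-2 per level) with $X_{2i}$ a function of $(W_2,Y_2^{i-1})$ and hence of the conditioning, we get $H(Y_{2i}|Y_2^{i-1},W_2,U_i) = H(V_{1i}|Y_2^{i-1},W_2,U_i)$. Next I would add $X_2^i, V_1^{i-1}$ to the conditioning: these are all functions of already-conditioned quantities (using Claim \ref{claim-1}, $X_2^i$ is a function of $(W_2,V_1^{i-1})$, and $V_1^{i-1}$ is part of $U_i$), so entropy is unchanged, giving $H(V_{1i}|Y_2^{i-1},W_2,U_i,X_2^i,V_1^{i-1})$. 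Then I would drop $Y_2^{i-1}$ and $W_2$: conditioned on $(V_1^{i-1}, X_2^{i-1})$ we have that $Y_2^{i-1}$ is determined, and the channel's deterministic/memoryless structure makes $V_{1i}$ depend on the past only through $X_{1i}$, which given $U_i \supseteq V_2^{i-1}$ is conditionally independent of $W_2$ (this is the content of Claim \ref{claim:converseIndependence}, or more elementarily the Markov structure $V_{1i} - (W_1, V_2^{i-1}) - (\text{rest})$). This collapses the term to $H(V_{1i}|U_i)$, and summing gives the claim.

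The main obstacle I anticipate is the very last collapse: justifying that $H(V_{1i}|Y_2^{i-1},W_2,U_i,X_2^i,V_1^{i-1})$ equals $H(V_{1i}|U_i)$ rather than something like $H(V_{1i}|V_1^{i-1})$. In Claim \ref{claim-2} the analogous step (c) produced $H(V_{1i}|W_2,V_1^{i-1})$, with $W_2$ still present and no $V_2^{i-1}$; here we instead want to land on $U_i = (V_1^{i-1},V_2^{i-1})$ with $W_2$ removed, so I must use the conditional independence of $W_2$ from $V_{1i}$ given $U_i$ — which is exactly why Claim \ref{claim:converseIndependence} is proved first. I would make sure to invoke that claim explicitly, noting that $V_{1i}$ is a function of $X_{1i}$, and $X_{1i}$ is (given $U_i$) independent of $X_{2i}$ and hence of $W_2$, so conditioning on $W_2$ (and on $X_2^i, V_1^{i-1}$, which given $U_i$ add no information about $X_{1i}$) may be dropped. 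Everything else is a routine bookkeeping of which variable is a function of which, all of it already available from Claims \ref{claim-2}, \ref{claim-1}, and \ref{claim:converseFunctional}.
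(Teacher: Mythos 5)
Your proof is correct and follows essentially the same route as the paper: the first equality by observing that $U_i$ is a deterministic function of $(W_2,Y_2^{i-1})$ so it can be freely added to the conditioning, and the second equality by exchanging $Y_{2i}$ for $V_{1i}$ given $X_{2i}$, stripping redundant conditioning via Claim~\ref{claim-1}, and finally invoking Claim~\ref{claim:converseIndependence} to drop $W_2$. The paper runs the second equality in the opposite direction (starting from $\sum H(V_{1i}|U_i)$, adding $W_2$, collapsing $U_i$ to $V_1^{i-1}$, and then citing Claim~\ref{claim-2}), but the ingredients and the crucial conditional-independence step are identical to yours. One small caution: this claim lives in the El Gamal--Costa appendix, so you should not justify ``$Y_{2i}$ determines $V_{1i}$ given $X_{2i}$'' by writing $Y_{2i}=X_{2i}\oplus V_{1i}$ (that is specific to the linear modulo-2 model). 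The property you actually need --- that $V_{1i}$ is recoverable from $(Y_{2i},X_{2i})$ --- is exactly what the El Gamal--Costa condition $H(Y_2|X_2)=H(V_1)$ guarantees, so the step is valid; just cite that condition rather than the modulo-2 structure.
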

\begin{proof}

We prove the first equality.
\begin{align*}
\begin{split}
H(Y_2^{N}|W_2) &= \sum H(Y_{2i}|Y_2^{i-1},W_2) \overset{(a)}{=} \sum H(Y_{2i}|Y_2^{i-1},W_2,X_2^{i},V_2^{i},V_1^{i-1}) \\
&=  \sum H(Y_{2i}|Y_2^{i-1},W_2,X_2^{i},U_{i}) \overset{(b)}{=}  \sum H(Y_{2i}|Y_2^{i-1},W_2,U_{i}),
\end{split}
\end{align*}
where $(a)$ follows from the fact that $X_2^{i}$, $V_2^{i}$, $V_1^{i-1}$ are functions of $(W_2,Y_2^{i-1})$, $X_2^{i}$, $(X_2^{i-1},Y_2^{i-1})$, respectively; $(b)$ follows from the fact that $X_2^{i}$ is a function of $(W_2,Y_2^{i-1})$.

Next we prove the second one.
\begin{align*}
\begin{split}
\sum H(V_{1i}|U_i) &\overset{(a)}{=} \sum H(V_{1i}|U_i, W_2) \overset{(b)}{=}  \sum H(V_{1i}|V_1^{i-1},W_2) =  H(V_{1}^N|W_2) \overset{(c)}{=} H(Y_2^{N}|W_2),
\end{split}
\end{align*}
where $(a)$ is because $V_{1i}$ and $W_2$ are conditionally independent given $U_i$ (by Claim \ref{claim:converseIndependence}); and $(b)$ follows from the fact that $V_2^{i-1}$ is a function of $(W_2,V_1^{i-1})$; and $(c)$ follows from Claim \ref{claim-2}.
\end{proof}

\begin{claim}
\label{claim:converseFunctional}
For $i\geq 1$, $X_1^{i}$ is a function of $(W_1,Y_2^{i-1})$ and
$X_2^{i}$ is a function of $(W_2,Y_1^{i-1})$.
\end{claim}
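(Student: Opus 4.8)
The plan is to prove the two statements together by induction on $i$; since they are mirror images of each other under swapping the user indices, a single induction carried out symmetrically suffices, and I will present it for the assertion that $X_2^{i}$ is a function of $(W_2, Y_1^{i-1})$. The base case is immediate: $X_{21}=f_2^{1}(W_2)$ depends on $W_2$ alone because $Y_2^{0}$ is empty, and likewise $X_{11}=f_1^{1}(W_1)$.

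Before the inductive step I would isolate the one genuinely new ingredient, which is a pure channel property making no reference to the messages or the code: from the deterministic (injective) structure, $X_1^{k}$ is a function of $(Y_1^{k},V_2^{k})$ for every $k$, and symmetrically $X_2^{k}$ is a function of $(Y_2^{k},V_1^{k})$. Indeed $Y_{1j}$ is a function of $(X_{1j},V_{2j})$, and since $H(Y_1\mid X_1)=H(V_2)$ the map $X_{1j}\mapsto Y_{1j}$ is invertible once $V_{2j}$ is held fixed; hence $X_{1j}$, and therefore $V_{1j}=g_1(X_{1j})$, is recovered from $(Y_{1j},V_{2j})$. This ``cancel the known interference'' step is the one I expect to be the crux: it is precisely where the deterministic model is used, and in the linear deterministic case it amounts to the observation that once $V_{2j}$ is subtracted, what remains of $Y_{1j}$ pins down every bit level of $X_{1j}$ that is ever seen by either receiver.

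For the inductive step I would assume $X_2^{i-1}$ is a function of $(W_2,Y_1^{i-2})$, hence of $(W_2,Y_1^{i-1})$, and chain the deterministic substitutions: $V_2^{i-1}=g_2(X_2^{i-1})$ is then a function of $(W_2,Y_1^{i-1})$; feeding this together with $Y_1^{i-1}$ into the cancellation step yields $X_1^{i-1}$, and then $V_1^{i-1}=g_1(X_1^{i-1})$, as a function of $(W_2,Y_1^{i-1})$; since $Y_{2j}$ is a function of $(X_{2j},V_{1j})$ for $j\leq i-1$, it follows that $Y_2^{i-1}$ is a function of $(W_2,Y_1^{i-1})$, and therefore so is $X_{2i}=f_2^{i}(W_2,Y_2^{i-1})$; combining with the hypothesis, $X_2^{i}=(X_2^{i-1},X_{2i})$ is a function of $(W_2,Y_1^{i-1})$, which closes the induction. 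An equivalent organization would start from Claim~\ref{claim-1}, i.e., that $X_2^{i}$ is a function of $(W_2,V_1^{i-1})$, and then show $V_1^{i-1}$ is a function of $(W_2,Y_1^{i-1})$ by the same chain; it is the same argument with the pieces reassembled.
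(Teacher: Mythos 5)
Your proof follows the paper's route: both reduce the claim to the channel-invertibility assertion that $X_{1j}$, and hence $V_{1j}$, is recoverable from $(Y_{1j},V_{2j})$, and then unwind the feedback dependency chain; you package the unwinding as an induction on $i$ while the paper iterates the substitution starting from Claim~\ref{claim-1}, but, as you note yourself, these are the same argument reassembled.

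The gap is in the justification of the crux step. You write that, since $H(Y_1\mid X_1)=H(V_2)$, the map $X_{1j}\mapsto Y_{1j}$ is invertible once $V_{2j}$ is held fixed. But that condition, together with the deterministic relation $Y_{1}=f_1(X_{1},V_{2})$, is equivalent to $H(V_2\mid Y_1,X_1)=0$, i.e.\ injectivity of $v_2\mapsto f_1(x_1,v_2)$ for each fixed $x_1$; it yields recoverability of $V_2$ from $(Y_1,X_1)$, not of $X_1$ from $(Y_1,V_2)$. These two directions are genuinely different: a degenerate El Gamal--Costa channel with $Y_1=V_2$ and $V_2$ independent of $X_1$ satisfies $H(Y_1\mid X_1)=H(V_2)$ yet $(Y_1,V_2)$ reveals nothing about $X_1$, and already in the symmetric linear deterministic model with $m>n$, subtracting $V_2$ from $Y_1$ leaves only the top $n$ of the $m$ bit levels of $X_1$, so $V_1=X_1$ is not determined by $(Y_1,V_2)$. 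The paper asserts the same recoverability parenthetically (``$X_2^{i-1}$, a function of $(Y_2^{i-1},V_1^{i-1})$'') without proof, so your proposal is faithful to the paper's logic; but the inference you draw from $H(Y_1\mid X_1)=H(V_2)$ is backwards and does not close this step --- an additional structural hypothesis on the deterministic maps, or a different argument, is needed.
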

\begin{proof}
By symmetry, it is enough to prove only one. We know from Claim \ref{claim-1} that $X_1^{i}$ is a function of $(W_1,V_2^{i-1})$.
Note that $V_2^{i-1}$ is a function of $X_{2}^{i-1}$ (a function of ($Y_2^{i-1},V_1^{i-1}$)). Also note that $V_1^{i-1}$ is a function of $X_{1}^{i-1}$ (a function of ($W_1, V_2^{i-2}$) by Claim \ref{claim-1}). Hence we know that
\begin{align*}
\begin{split}
X_1^{i} \textrm{ is a function of } (W_1,V_2^{i-2},Y_2^{i-1}).
\end{split}
\end{align*}
Iterating this procedure $(i-3)$ times, we know that $X_1^{i}$ is a function of $(W_1,V_{21},Y_2^{i-1})$. Note that $V_{21}$ is a function of $X_{21}$ (a function of ($V_{11},Y_{21})$) and $V_{11}$ is a function of $X_{11}$. Since $X_{11}$ depends only on $W_1$ due to no feedback in the initial time, we conclude that
\begin{align*}
X_1^{i} \textrm{ is a function of } (W_1,Y_2^{i-1}).
\end{align*}
\end{proof}

\bibliographystyle{ieeetr}

\end{document}